
\documentclass[10]{article}

\usepackage{amsmath}
\usepackage{amsthm}
\usepackage{amsfonts}
\usepackage{amssymb}

\usepackage{times}           

\usepackage{palatino}                     

\usepackage{tikz}
\usetikzlibrary{arrows,shapes}
\usetikzlibrary{decorations.markings}
\usetikzlibrary{calc}


\newtheorem{theorem}{Theorem}[section]

\newtheorem{proposition}[theorem]{Proposition}
\newtheorem{corollary}[theorem]{Corollary}
\theoremstyle{definition}
\newtheorem{definition}[theorem]{Definition}
\newtheorem{remark}[theorem]{Remark}
\newtheorem{example}[theorem]{Example}

%
\def\AA{{\mathcal A}}  \def\DD{{\mathcal D}}  \def\FF{{\mathcal F}}   
\def\OO{{\mathcal O}} \def\TT{{\mathcal T}}  \def\XX{{\mathcal X}}
\def\La{\Lambda}    
\def\Ld{\hbox{\it\char'44}^{}} 
\def\q{\partial}
\def\om{\;{\scriptstyle\circ}\;}
\def\w{\wedge}  
\def\wtilde{\widetilde}
\def\trr{\triangleright}
\def\ldb{[\![}  \def\rdb{]\!]}

\def\Ker{{\rm Ker}\,}
\def\Im{{\rm Im}\,}  
\def\Tr{{\rm Tr}\,}
\def\ad{{\rm ad}}
 
\def\id{{\rm id}}    
\def\spann{{\rm span}\,}
\def\rmEnd{\rm End\,}

\def\jjj{\,\vrule width6pt height.4pt depth0pt
        \vrule height6pt width.4pt depth0pt\,}    

\title{Canonical endomorphism field \\ on a Lie algebra}
\author{Jerzy Kocik\\
{\small Department of Mathematics, SIU, Carbondale, IL 62901, USA} \\
{\small E-mail: jkocik@siu.edu}}

\date{}
\begin{document}
\frenchspacing

\maketitle

\begin{abstract}
We show that every Lie algebra is equipped with a natural 
$(1,1)$-variant tensor field, the ``canonical endomorphism field", 
naturally determined by the Lie structure, 
and satisfying a certain Nijenhuis bracket condition.
This observation may be considered as complementary to 
the Kirillov-Kostant-Souriau theorem on symplectic 
geometry of coadjoint orbits. 
We show its relevance for classical mechanics, in particular for  Lax equations. 
We show that the space of Lax vector fields is closed under Lie bracket
and we introduce a new  bracket for vector fields on a Lie
algebra.  This bracket defines a new Lie structure on the space of vector fields.  

\par\smallskip\noindent
{\bf 2000 MSC:} 
17B08,             
53C15,            
53C80,            
70G45,            
70G60,            
70H03,            
70H05.            

\par\smallskip\noindent
\textbf{Keywords:} {Lie algebra,  Nijenhuis bracket, classical mechanics,  Lax equations, 
                                 Kirillov-Kostant-Souriau theorem, Poisson theorem,   graphical language.
}
\end{abstract}


\paragraph{Notation.}
We shall distinguish between purely algebraic and 
differential products by using two types of brackets:

\noindent 
\begin{tabular}{cl}
 $\ldb \ \,, \ \rdb$            &---  Lie algebra product,\\[2pt]  
 $ \lbrack \ \,, \ \rbrack$  &--- Lie commutator of vector fields, Schouten bracket, Nijenhuis bracket.
\end{tabular}

\vspace{3pt}
\noindent
The summation convention over repeated indices is adopted throughout the paper. 

\section{Introduction} 

It is well-known 
that the underlying dual space $L^*$ of a  Lie 
algebra $L$ possesses --- as a manifold --- a canonical Poisson structure 
in terms of a smooth bi-vector field $\Omega\in\w^2TL^*$, which satisfies the Jacobi condition
$[\Omega,\Omega]=0$, and, when restricted to coadjoint orbits, is
nondegenerate and therefore invertible into a symplectic
structure  \cite{Tu1, Tu2, Li}.  
The existence of these symplectic sheets is the content
of the Kirillov-Kostant-Souriau Theorem \cite{CIMP, Ki, So}. 

In this paper we present an overlooked fact that the Lie algebra $L$ itself 
also possesses --- as a manifold --- a natural 
differential-geometric object, namely a $(1,1)$-type tensor field $\AA\in \TT^{(1,1)}L$
that we shall call the {\bf canonical endomorphism field} on $L$.
The principal geometric property of $\AA$ is that it is proportional to its own 
Nijenhuis derivative (Theorem \ref{th:2.1}). 

We discuss the relevance of this object for dynamical systems.
It turns out that what Hamilton equations are for the dual space $L^*$,
Lax equations are for $L$.
The principal property of $\AA$ assures that the space of 
``Lax vector fields" is closed under the Lie commutator
and, moreover, it allows one to introduce a new bracket of 
vector fields on $L$, which is the analog for Lax equations of the Poisson bracket on
Hamiltonian vector fields.

\section{The canonical endomorphism field on a Lie algebra}  

Customarily one defines a Lie algebra as a linear space $L$ 
with a product $L\times L\mapsto L$ denoted
$\ldb v,w\rdb$ (double bracket).
The product is bilinear, skew-symmetric (i),  
and satisfies the Jacobi identity (ii):
\begin{equation}	\label{eq:2.1}
\begin{aligned}
   {\rm{(i)}}  &\qquad \ldb v, w\rdb = - \ldb w, v\rdb  \\
   {\rm{(ii)}} &\qquad \ldb v, \ldb w, z\rdb\rdb
             + \ldb w, \ldb z, x\rdb\rdb
             + \ldb x, \ldb w, z\rdb\rdb=0  
\end{aligned}
\end{equation}
In a basis $\{e_i\}$, the commutator can be represented via ``structure
constants":
\begin{equation}	\label{eq:2.2}
                \ldb e_i, e_j\rdb = c^k_{ij}e_k
\end{equation}
Here we shall rather follow \cite{Ko} and define a Lie algebra as a pair $\{L,\,c\}$
where $c$ is a $(1,2)$-type tensor that in the above basis is 
\begin{equation}	\label{eq:22.1}
   c= {\scriptstyle\frac{1}{2}} c^k_{ij} \; \varepsilon^i\w \varepsilon^j\otimes e_k
\end{equation}
where $\{\varepsilon^i\}$ is the dual basis.  
The algebra product becomes a secondary, derived, concept:
$\ldb v,\, w \rdb = (v\w w)\jjj c = i_w  i_v c$. 
Similarly, the adjoint action of  $v\in L$
is defined simply as a (1,1)-tensor $\ad_v = v\jjj c$ in $L$.
Of course, from the structural point of view both definitions are equivalent,
$\{L,c\} \equiv \{L,\, \ldb\,\cdot\, ,\,\cdot\,\rdb\}$.

The point of the present paper is to look at the space $L$ as a flat manifold 
and consider various differential-geometric objects on it.
(We shall assume that $L$ is a real and finite dimensional.) 
The linear structure 
of this manifold allow one to prolong any tensor
$T$ in $L$ to the ("constant") tensor field $\wtilde T$ on the {\it manifold}  $L$. 
In particular, the manifold $L$ is equipped with 
a constant $(1,2)$-type tensor field $\lambda = \wtilde c$:
\begin{equation}	\label{eq:22.2}
   \lambda= {\scriptstyle\frac{1}{2}} c^k_{ij} \;  dx^i\w dx^j\otimes \q_k
\end{equation}
where $\{x^i\}$ are coordinates on $L$ associated with the basis $\{e_i\}$ 
and where we denote $\q_i \equiv \q / \q x^i$.
The manifold $L$ is also equipped with a natural vector field,
the Liouville vector field, which in a linear coordinate system is
\begin{equation}	\label{eq:22.222}
                     J= x^i\q_i
\end{equation}
Here is our basic observation:
\begin{theorem}  \label{th:2.1}
The manifold of the Lie algebra $L$ possesses a
natural field of endomorphisms (i.e., a (1,1)-variant tensor field)
$\AA\in\TT^{(1,1)}L$ defined by
\begin{equation}	\label{eq:22.33}
        \AA = J\jjj\lambda 
\end{equation}
Its Nijenhuis derivative $[\AA,\AA]$ is a vector-valued biform
\begin{equation}	\label{eq:22.44}
    \lbrack \AA,\AA] = -2 \lambda\jjj\AA
\end{equation}
Moreover $\AA$ acts on the adjoint orbits on $L$.
\end{theorem}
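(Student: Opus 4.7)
The plan is to choose the linear coordinates $\{x^i\}$ from the statement and to reduce every assertion to bookkeeping on the structure constants $c^k_{ij}$.

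First I would unpack $\AA=J\jjj\lambda$ in coordinates. Contracting $J=x^i\q_i$ into the first slot of the vector-valued two-form $\lambda$ and using antisymmetry of $c^k_{ij}$ yields $\AA = c^k_{ij}\,x^i\,dx^j\otimes\q_k$, so $\AA^k_j(x)=c^k_{ij}x^i$ and the fibre of $\AA$ over a point $v\in L$ is exactly $\ad_v\in\rmEnd(L)$ under the standard identification $T_vL\cong L$. This single observation settles the last assertion of the theorem for free: the tangent space to the adjoint orbit $\OO_v$ at $v$ is $T_v\OO_v=\Im\ad_v$, and $\ad_v$ obviously carries its own image into itself, so $\AA$ restricts to a $(1,1)$-tensor field on every adjoint orbit.

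For the Nijenhuis identity I would evaluate both sides on a coordinate pair $(\q_l,\q_m)$. Since coordinate fields commute, the last term in the Fr\"olicher--Nijenhuis formula
\[
\tfrac12[\AA,\AA](X,Y)=[\AA X,\AA Y]-\AA[\AA X,Y]-\AA[X,\AA Y]+\AA^2[X,Y]
\]
drops out. Using $\AA\q_l=c^a_{il}x^i\q_a$ together with the Leibniz rule $[f\q_a,g\q_b]=f(\q_a g)\q_b-g(\q_b f)\q_a$, the three surviving pieces can be written down directly, and after tidying by antisymmetry of $c^k_{ij}$ they collapse to
\[
\tfrac12[\AA,\AA](\q_l,\q_m)^b=\bigl(c^a_{il}c^b_{am}-c^a_{im}c^b_{al}-2\,c^a_{lm}c^b_{ia}\bigr)x^i.
\]

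The decisive step is the Jacobi identity, written as $c^a_{il}c^b_{am}+c^a_{lm}c^b_{ai}+c^a_{mi}c^b_{al}=0$ (cyclic in $i,l,m$): it replaces $c^a_{il}c^b_{am}-c^a_{im}c^b_{al}$ by $-c^a_{lm}c^b_{ai}$, and one more use of antisymmetry reduces the whole bracket to $-c^a_{lm}c^b_{ia}x^i$. Recognising the right-hand side of the theorem as post-composition of $\lambda$ with $\AA$, namely $(\lambda\jjj\AA)(\q_l,\q_m)^b=\AA^b_k\,c^k_{lm}=c^b_{ia}c^a_{lm}x^i$, matches the two sides exactly and yields $[\AA,\AA]=-2\,\lambda\jjj\AA$. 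The real obstacle is purely clerical --- keeping the sign convention for $\jjj$, the normalisation of $\lambda$, and the chosen convention for the Nijenhuis bracket mutually consistent; once those are pinned down, a single application of Jacobi does all the work.
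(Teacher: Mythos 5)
Your proposal is correct and follows essentially the same route as the paper: both evaluate the Fr\"olicher--Nijenhuis formula on constant (coordinate) vector fields so that the $\AA^2[X,Y]$ term vanishes, reduce the remaining three terms via the Jacobi identity to $-\AA(\ldb\,\cdot\,,\,\cdot\,\rdb)$, and prove the orbit statement by observing that $\AA_x=\ad_x$ maps everything into $\Im\ad_x=T_x\OO$. The only difference is presentational --- the paper packages the index computation into the vector fields $X_v=\AA\wtilde v$ and the homomorphism $[X_v,X_w]=X_{\ldb v,w\rdb}$, while you carry the structure constants explicitly --- and your index bookkeeping (including the sign conventions for $\jjj$ and the Jacobi identity) checks out against the paper's coordinate formula \eqref{eq:Abasic}.
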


We shall call $\AA$ the {\bf canonical endomorphism field} on $L$.
In the coordinate description, $\AA$ and its Nijenhuis derivative are 
\begin{equation}	\label{eq:Abasic}
\begin{array}{rl}
               \AA   &=  \ x^i\, c^k_{ij} \; dx^j\otimes  \q_k \\[2pt]       
       [\AA,\AA] &=  \ - x^k\; c_{kp}^i c_{ab}^p  \;(dx^a\w  dx^b)\otimes\q_i
\end{array}
\end{equation}
The endomorphism field $\AA$ may be viewed as a family of local transformations that at point $x\in L$
can be represented by matrix $\AA_j^k(x) = x^i\, c^k_{ij}$.

Before we give its proof,  let us restate the theorem in more standard terms. 
The natural isomorphism of a tangent space at any $x\in L$  
with the space $L$ itself will be denoted by $\mu_x:T_xL\to L$.  
Then Theorem \ref{th:2.1}
states that every Lie algebra $L$ possesses, as a manifold, a unique 
natural tensor field $\AA\in\TT^{(1,1)}L$, which at point $x\in L$ 
is defined as an endomorphism taking a tangent vector $v\in T_xL$ to 
\begin{equation}	\label{eq:2.3}
        \AA_x(v) = (\mu_x^{-1}\om \ad_x\om \mu_x) (v)  
\end{equation}
or, in a somewhat sloppy notation,    $\AA(v)  = \ldb x,v\rdb$.
Its Nijenhuis derivative $[\AA,\AA]$ is a vector-valued biform
the evaluation of which equals  for any $v,w\in TL$:
\begin{equation}	\label{eq:2.4}
\begin{aligned}
    \lbrack \AA,\AA](v,w)  &= -2 \AA(\ldb v,w\rdb) \\
               &= -2 (\ldb \AA v,w\rdb + \ldb v,\AA w\rdb) 
\end{aligned}
\end{equation}
at point $x\in L$, the dependence of which was suppressed in the notation.

\begin{remark}
The canonical endomorphism field $\AA$ is defined for an arbitrary algebra and its 
differential-geometric properties, including the Nijenhuis bracket $[\AA,\AA]$, 
will reflect the type of this algebra.
In the present paper we restrict to Lie algebras where the Jacobi identity 
implies particularly pleasant consequences. 
\end{remark}

The above theorem may be viewed as a counterpart of the KKS theorem:
the essence of which is that the dual space $L^*$ is equipped with 
a bi-vector field $\Omega = x_k c^k_{ij} \q^i\w\q^j$  (in our language
$\Omega = J\jjj\lambda$).  Instead of the Nijenhuis bracket we have the Schouten bracket
$[\Omega,\,\Omega]_{Sch} = 0$. 
Thus $\Omega$ defines a Poisson structure, which, moreover,  restricts to the coadjoint orbits, 
on which its inverse $\omega$ defines a symplectic structure, $\omega = 0$.
Section \ref{s:duality} summarizes these parallels.

\section{Lie algebra in pictures}   

Tensor calculus gains much transparency when expressed in graphical language.

~\\
{\bf \large Basic Glyphs.}  
Here are the basic {\bf glyphs }corresponding to various tensors:
%
\begin{center}
\begin{tabular}{cccccc}
\phantom{\qquad\qquad}&
\phantom{\qquad\qquad}&
\phantom{\qquad\qquad}&
\phantom{\qquad\qquad}&
\phantom{\qquad\qquad}&
\phantom{\qquad\qquad}\\[-24pt]
\begin{tikzpicture}[baseline=-44pt, scale=1]
\node (a) at (0,0) [rectangle, draw, thick, minimum size=.6cm] {$s$};
\end{tikzpicture}
&
%
\begin{tikzpicture}[baseline=-44pt,scale=1]
\node (a) at (0,0) [rectangle, draw, thick, minimum size=.7cm] {\bf v};
\node  (b) at (.5,1.2) {};
\draw[-triangle 45] (a) to [out=90,in=225] (b);
\end{tikzpicture}
& 
%
\begin{tikzpicture}[baseline=-44pt,scale=1] 
\node (a) at (0,0) [rectangle, draw, thick, minimum size=.7cm] {$\mathbf \alpha$};
\node (b) at (.5,-1.1) {};
\draw[-o] (a) to [out=-90,in=135]  (b);
\end{tikzpicture}
&
%
\begin{tikzpicture}[scale=1] 
\node (a) at (0,0) [rectangle, draw, thick, minimum size=.7cm] {$\mathbf A$};
\node (bu) at (.5,1.3) {};
\node (bd) at (-.5,-1.3) {};
\draw[-triangle 60](a)to[out=90,in=-135](bu);
\draw[-o](a)to[out=-90,in=90](bd);
\end{tikzpicture}
&
%
\begin{tikzpicture}[scale=1] 
\node (g) at (0,0) [rectangle, draw, thick, minimum size=.7cm] {\ \bf g\ ~};
\node (b1) at (-.4,-1.2) {};
\node (b2) at (.4,-1.25) {};
\draw[-o] ($(g.south) - (.4em,0)$) to [out=-90,in=80] (b1);
\draw[-o] ($(g.south) + (.4em,0)$) to [out=-90,in=125](b2);
\end{tikzpicture}
&
%
\begin{tikzpicture}[scale=1] 
\node (g) at (0,0) [rectangle, draw, thick, minimum size=.7cm] {\ \bf T\ ~};
\node (u1) at (-.4,1.2) {};
\node (u2) at (.2,1) {};
\node (u3) at (.7,1) {};
\node (b1) at (-.4,-1.2) {};
\node (b2) at (.4,-1.2) {};
\draw[-triangle 60] ($(g.north) - (.4em,0)$)  to [out=90,in=-80] (u1);
\draw[-triangle 60] ($(g.north) + (0,0)$)       to [out=90,in=-115](u2);
\draw[-triangle 60] ($(g.north) + (.4em,0)$) to [out=90,in=-125](u3);

\draw[-o] ($(g.south) - (.4em,0)$) to [out=-90,in=80] (b1);
\draw[-o] ($(g.south) + (.4em,0)$) to [out=-90,in=125](b2);
\end{tikzpicture}
%
\\[-4pt]
\small\sf scalar &\small\sf vector &\small\sf 1-form &\small\sf endomorphism &\small\sf scalar product
&\small\sf (3,2)-variant 
\\[-4pt]
& & & & &\small\sf tensor
\end{tabular}
\end{center}
%

%
\noindent
where $s$ is a scalar, $\mathbf v$ is a vector, $\alpha$ is a covector, 
$A$ is an endomorphism, $g$ is a metric or biform.
The links with arrows  \ \tikz[baseline=-3pt]\draw[-triangle 45] (0,0) -- (7mm,0);
\ and links with circles \ \tikz[baseline=-3pt]\draw[-o] (0,0) -- (7mm,0); \ represent the contravariant 
and the covariant attributes of a tensor, respectively.
You may think of them as contravariant/covariant (upper/lower) indices 
in some basis description.
Scalars have none.

The ``in" and ``out" links may go any direction.  Turning and weaving in space does 
not have any meaning  (unlike in some other convections).  
For instance:

\begin{center}
this representation 
\hspace{.1in}
%
\begin{tikzpicture}[baseline=-4pt,scale=1] 
  \node (t) at (0,0) [rectangle, draw, thick, minimum size=.7cm] {\ \bf T ~};
  \coordinate (d1) at (-.5,-1);
  \coordinate (d2) at (.5,-1); 
  \coordinate (u1) at (-.5,1);
  \coordinate (u2) at (.5,1); 
  \coordinate (u3) at (1,1);
  \draw[-o] ($(t.south) - (.7em,0)$) to [out=-90,in=70] (d1);
  \draw[-o] ($(t.south) + (.7em,0)$)to[out=-90,in=120](d2);
  \draw[-triangle 60] ($(t.north) - (.7em,0)$) to [out=90,in=-60] (u1);
  \draw[-triangle 60] ($(t.north) + (.1em,0)$) to [out=90,in=-120] (u2);
  \draw[-triangle 60] ($(t.north) + (.7em,0)$) to [out=90,in=-120] (u3);
\end{tikzpicture}
\hspace{.1in}
             is as good as this
\hspace{.1in}
\begin{tikzpicture}[baseline=-4pt,scale=1] 
  \node (t) at (0,0) [rectangle, draw, thick, minimum size=.7cm] {\ \bf T\ ~};

  \coordinate (d1) at (-.5,-1);
  \coordinate (d2) at (.2,-1); 

  \coordinate (u1) at (-.5,1);
  \coordinate (u2) at (.8,1); 
  \coordinate (u31) at (1,.4);
  \coordinate (u32) at (1.4,-.5);
  \coordinate (u33) at (2.2,0);

  \draw[-o] ($(t.west) - (0, .2em)$)  to  [out=180,in=160] (d1);
  \draw[-o] ($(t.south) + (.1em,0)$) to [out=-90,in=120](d2);

  \draw[-triangle 60] ($(t.north) - (.7em,0)$) to [out=90,in=-60] (u1);
  \draw[-triangle 60] ($(t.north) + (.6em,0)$) to [out=90,in=-120] (u2);
  \draw[-] ($(t.east)  + (0,.2em)$) to [out=0, in=180] (u31);
  \draw[-] (u31) to [out=0,in=-180] (u32);
  \draw[-triangle 60] (u32) to [out=0,in=-140] (u33);
\end{tikzpicture}
\end{center}

The links may leave the box at any position, but the order of the point of departure is fixed: 
 the contravariant indices are ordered clockwise, while the covariant indices counterclockwise. 
Links may cross without any 
meaning implied.
\\

\noindent
Glyphs may be composed into {\bf pictograms} that represent 
terms resulting by manipulation with tensors.   
The tensor contractions are obtained by joining "ins" with "outs".
Here are some basic cases:
\\

\noindent
{\bf Evaluation:}  Here is the evaluation of a covector on a form:

\begin{center}
$\langle\alpha,v\rangle \equiv \alpha(\mathbf v) = $
\hspace{.2in}
%
\begin{tikzpicture}[baseline=23pt, scale=.9, yscale=.95]
\node (a) at (0,0) [rectangle, draw, thick, minimum size=.7cm] {\bf v};
\coordinate (ab) at (.4,1.1);
\node (b) at (.8,2.2) [rectangle, draw, thick, minimum size=.7cm] {$\mathbf\alpha$};
\draw[-triangle 45](a)to[out=90,in=-135](ab);
\draw[-o](b)to[out=-90,in=45](ab);
\end{tikzpicture}
\hspace{.2in}
or simply
\hspace{.2in}
%
\begin{tikzpicture}[baseline=23pt,scale=.85, 
decoration={markings, mark=at position .55 with {\arrow{angle 45}};}]
  \node (v) at (0,0) [rectangle, draw, thick, minimum size=.7cm] {\bf v};
  \node (a) at (.8,2.2) [rectangle, draw, thick, minimum size=.7cm] {$\mathbf \alpha$};
  \draw[postaction={decorate}](v) to [out=90,in=-90](a);
\end{tikzpicture}
\end{center}

\newpage
\noindent
{\bf Scalar product:}  The scalar product of two vectors is a scalar $g(\mathbf v,\,\mathbf w) =$,
but if only one vector is contracted with $g$, then the result is a one-form: 

\begin{center}
$g(\mathbf v,\,\mathbf w) =$
%
\begin{tikzpicture}[baseline=-23pt,scale=.9, yscale=.95] 
\node (g) at (0,0) [rectangle, draw, thick, minimum size=.6cm] {\ \bf g\ ~};
\coordinate (g1) at (-.5,-1);
\coordinate (g2) at (.5,-1); 
\node (v) at (-.8,-2) [rectangle, draw, thick, minimum size=.6cm] {\bf v};
\node (w) at (.8,-2) [rectangle, draw, thick, minimum size=.6cm] {\bf w};
 \draw[-o]($(g.south) - (.7em,0)$) to [out=-90,in=70] (g1);
 \draw[-triangle 60] (v) to [out=90,in=-110] (g1);
 \draw[-o] ($(g.south) + (.7em,0)$)to[out=-90,in=120](g2);
 \draw[-triangle 60] (w) to [out=90,in=-60] (g2);
\end{tikzpicture}
\hspace{.005in}
                   $\cong$
\hspace{.005in}
%
\begin{tikzpicture}[baseline=-23pt, scale=.9, yscale=.9,
decoration={markings, mark=at position .55 with {\arrow{angle 45}};}]     
\node (g) at (0,0) [rectangle, draw, thick, minimum size=.6cm] {\ \bf g\ ~};
\node (v) at (-.8,-2) [rectangle, draw, thick, minimum size=.6cm] {\bf v};
\node (w) at (.8, -2) [rectangle, draw, thick, minimum size=.6cm] {\bf w};
\draw[postaction={decorate}](v) to [out=90,in=-90] ($(g.south) - (.7em,0)$) ;
\draw[postaction={decorate}](w) to [out=90,in=-90] ($(g.south) + (.7em,0)$);
\end{tikzpicture}
\hspace{.3in}
$g(\mathbf v,\;\cdot\;) \equiv \mathbf v \jjj g = $ 
%
\begin{tikzpicture}[baseline=-23pt,xscale=.9, yscale=.9,
decoration={markings, mark=at position .55 with {\arrow{angle 45}};}]   
\node (g) at (0,0) [rectangle, draw, thick, minimum size=.6cm] {\ \bf g\ ~};
\coordinate (g2) at (.54,-1.4); 
\node (v) at (-.8,-2) [rectangle, draw, thick, minimum size=.6cm] {\bf v};
 \draw[postaction={decorate}]  (v) to [out=90,in=-90] ($(g.south) -(.5em,0)$);
 \draw[-o]($(g.south) +(.5em,0)$) to [out=-90,in=100] (g2);
\end{tikzpicture}
\end{center}


~\\
{\bf Endomorphism} $A$ acting on a vector $v$ or covector $\alpha$ results in a vector or covector, respectively:
\begin{center}
$v\ \to\  A\mathbf v = $
%
\begin{tikzpicture}[baseline=-11pt,scale=.95, 
decoration={markings, mark=at position .55 with {\arrow{angle 45}};}]
  \node (A) at (0,0) [rectangle, draw, thick, minimum size=.5cm] {$A$};
  \node (v) at (-.5,-1.2) [rectangle, draw, thick, minimum size=.5cm] {\bf v};
  \coordinate (u) at (.4,.8);
  \draw[postaction={decorate}] (v) to [out=90,in=-90] (A);
  \draw[-triangle 60] (A) to [out=90,in=-140] (u);
\end{tikzpicture}
\hspace{.4in}
$\alpha \ \to \  A^*\alpha = \ $
\begin{tikzpicture}[baseline=7pt,scale=.95, 
decoration={markings, mark=at position .55 with {\arrow{angle 45}};}]
  \node (A) at (0,0) [rectangle, draw, thick, minimum size=.5cm] {$A$};
  \node (a) at (0.5,1.2) [rectangle, draw, thick, minimum size=.5cm] {$\alpha$};
  \coordinate (d) at (-.3,-.8);
  \draw[postaction={decorate}]  (A) to [out=90,in=-90] (a);
  \draw[-o] (A) to [out=-90,in=70] (d);
\end{tikzpicture}
\end{center}

~\\
{\bf Trace} may be represented by connecting ``in' with ``out" in a pictogram; 
If $A,B,C\in \rmEnd L$ are endomorphism of some linear space $L$, then we have:

\begin{center}

$\Tr A=$\hspace{.01in}
\begin{tikzpicture}[baseline=-4pt,scale=.6] 
  \node (t1) at (0,0) [rectangle, draw, thick, minimum size=.5cm] {A}; 
  \coordinate (zap1) at (0.5,1);  
  \coordinate (zap2) at (1,0);  
  \coordinate (zap3) at (.5,-1);  
 \draw[-] (t1)  to  [out=90,in=180] (zap1); 
 \draw[-angle 45] (zap1)   to   [out=0,in=90] (zap2); 
 \draw[-] (zap2)   to  [out=-90,in=0] (zap3); 
 \draw[-] (zap3)   to  [out=180,in=-90] (t1); 
\end{tikzpicture}
,\hspace{.2in}$\Tr AB=$\hspace{.0in}
\begin{tikzpicture}[baseline=-4pt,scale=.9] 
  \node (t1) at (0,0) [rectangle, draw, thick, minimum size=.5cm] {A}; 
  \node (t2) at (1,0) [rectangle, draw, thick, minimum size=.5cm] {B}; 
  \coordinate (zip) at (.6,0);  
  \draw[-angle 45] ($(t1.east) + (0,0)$) to [out=0,in=180] (zip);
  \draw[-] ($(t2.west) - (0,0)$) to [out=180,in=0] (zip);
  \coordinate (zap1) at (1.6,.4);  
  \coordinate (zap2) at (1,.8);  
  \coordinate (zap3) at (.5,.8);  
  \coordinate (zap4) at (0,.8);  
  \coordinate (zap5) at (-.6,.4);  
 \draw[-] ($(t2.east) - (0,0)$)  to  [out=0,in=-90] (zap1); 
 \draw[-] (zap1)   to   [out=90,in=0] (zap2); 
 \draw[-angle 45] (zap2)   to  [out=180,in=0] (zap3); 
 \draw[-] (zap3)   to  [out=180,in=0] (zap4); 
 \draw[-] (zap4)   to  [out=180,in=90] (zap5); 
 \draw[-] (zap5)   to  [out=-90,in=180] (t1); 
\end{tikzpicture}
,\hspace{.2in}$\Tr ABC=$\hspace{.0in}
\begin{tikzpicture}[baseline=-4pt,scale=.9] 
  \node (t1) at (0,0) [rectangle, draw, thick, minimum size=.5cm] {A}; 
  \node (t2) at (1,0) [rectangle, draw, thick, minimum size=.5cm] {B}; 
  \node (t3) at (2,0) [rectangle, draw, thick, minimum size=.5cm] {C}; 
  \coordinate (zip1) at (.6,0);  
  \draw[-angle 45] ($(t1.east) + (0,0)$) to [out=0,in=180] (zip1);
  \draw[-] ($(t2.west) - (0,0)$) to [out=180,in=0] (zip1);
  \coordinate (zip2) at (1.6,0);  
  \draw[-angle 45] ($(t2.east) + (0,0)$) to [out=0,in=180] (zip2);
  \draw[-] ($(t3.west) - (0,0)$) to [out=180,in=0] (zip2);
  \coordinate (zap1) at (2.6,.4);  
  \coordinate (zap2) at (2,.8);  
  \coordinate (zap3) at (1,.8);  
  \coordinate (zap4) at (0,.8);  
  \coordinate (zap5) at (-.6,.4);  
 \draw[-] ($(t3.east) - (0,0)$)  to  [out=0,in=-90] (zap1); 
 \draw[-] (zap1)   to   [out=90,in=0] (zap2); 
 \draw[-angle 45] (zap2)   to  [out=180,in=0] (zap3); 
 \draw[-] (zap3)   to  [out=180,in=0] (zap4); 
 \draw[-] (zap4)   to  [out=180,in=90] (zap5); 
 \draw[-] (zap5)   to  [out=-90,in=180] (t1); 
\end{tikzpicture}
\end{center}

\noindent
The notable property of trace of a composition of endomorphisms, namely 
its invariance under cyclic permutation of the entries, 
$\Tr A_1\om...\om A_{k-1}\om A_k = \Tr A_2\om ...\om A_{k}\om A_1$, 
becomes in graphical language   
verifiable with a simplicity of a mantra on a {\it japa mala}. 

~\\
{\bf Lie algebra in pictures.} An {\bf algebra} is defined by a (1,2)-variant tensor $c$, as shown below on the left.
Also a product and adjoint representation is shown:

\begin{center}
Alg structure $=$
%
\begin{tikzpicture}[baseline=-4pt,scale=.9] 
  \node (t) at (0,0) [rectangle, draw, thick, minimum size=.5cm] {\ \bf c\ ~};
  \coordinate (d1) at (-.5,-1);
  \coordinate (d2) at (.5,-1); 
  \coordinate (u) at (.2,1);
    \draw[-o] ($(t.south) - (.5em,0)$) to [out=-90,in=70] (d1);
    \draw[-o] ($(t.south) + (.5em,0)$)to[out=-90,in=120](d2);
    \draw[-triangle 60] ($(t.north) + (.1em,0)$) to [out=90,in=-120] (u);
\end{tikzpicture}
\hspace{.3in} $\ldb v,\,w\rdb =$ \hspace{.0in}
%
\begin{tikzpicture}[baseline=-4pt,scale=.6,
decoration={markings, mark=at position .55 with {\arrow{angle 45}};}]
 
  \node (t) at (0,0) [rectangle, draw, thick, minimum size=.5cm] {\ \bf c\ ~};
  \node (v) at (-.7,-2) [rectangle, draw, thick, minimum size=.5cm] { v };
  \node (w) at (.7,-2) [rectangle, draw, thick, minimum size=.5cm] { w};
  \coordinate (u) at (0,1.3);
  \draw[postaction={decorate}]    (v) to [out=90,in=-90]  ($(t.south) - (.5em,0)$);
  \draw[postaction={decorate}]   (w) to [out=90,in=-90] ($(t.south) + (.5em,0)$);
  \draw[-triangle 60] ($(t.north) + (0,0)$) to [out=90,in=-90] (u);
\end{tikzpicture}
\hspace{.3in} $\hbox{ad}_v =$ \hspace{0in}
%
\begin{tikzpicture}[baseline=-4pt,scale=.6,
decoration={markings, mark=at position .55 with {\arrow{angle 45}};}]

  \node (t) at (0,0) [rectangle, draw, thick, minimum size=.6cm] {\ \bf c\ ~};
  \node (v) at (-.7,-2) [rectangle, draw, thick, minimum size=.5cm] { v };
  \coordinate (d2) at  (.5,-2); 
  \coordinate (u) at (0,1.3);

  \draw[postaction={decorate}] (v) to [out=90,in=-90]  ($(t.south) - (.6em,0)$);
  \draw[-o] ($(t.south) + (.6em,0)$)to[out=-90,in=120](d2);
  \draw[-triangle 60] ($(t.north) + (0,0)$) to [out=90,in=-90] (u);
\end{tikzpicture}
\end{center}
\noindent
If a single algebra is considered, the letter ``c" will be suppressed. 

~\\
In the case of a {\bf Lie algebra}, besides skew-symmetry we have the {\bf Jacobi identity},  
which may be written this way
\begin{equation}
\label{eq:jacobi}
\begin{tikzpicture}[baseline=-4pt,scale=1] 
  \node (t1) at (0,0) [rectangle, draw, thick, minimum size=.6cm] {\ \ ~}; 
  \coordinate (u1) at (1,0);  
  \coordinate (d11) at (-1,0);  
  \coordinate (d12) at (0,-.4); 
  \node (t2) at (0,-1) [rectangle, draw, thick, minimum size=.6cm] {\ \ ~};  
  \coordinate (u2) at (0,-.5);
  \node (d21) at (-.4,-2.1) {a};
  \node (d22) at (.4,-2.1)  { b}; 
  \draw[-triangle 60] ($(t1.east)$) to [out=0,in=180] (u1);  
  \draw[-o] ($(t1.west) - (0,0)$) to [out=180,in=0] (d11);
  \draw[-] ($(t1.south) + (0,0)$) to [out=-90,in=120](d12);
  \draw[-angle 45] ($(t2.north) + (0,0)$) to [out=90,in=-90] (d12);
  \draw[-o] ($(t2.south) - (.5em,0)$)  to    [out=-90,in=80] (d21); 
  \draw[-o] ($(t2.south) + (.5em,0)$) to [out=-90,in=100](d22);
\end{tikzpicture}
\hspace{.2in} = \hspace{.2in}
\begin{tikzpicture}[baseline=-4pt,scale=1] 
  \node (t1) at (0,0) [rectangle, draw, thick, minimum size=.6cm] {\ \ ~}; 
  \coordinate (u1) at (1,0);  
  \coordinate (d11) at (-1,0);  
  \node (d12) at (0,-1.2) {a}; 
  \node (t2) at (1,0) [rectangle, draw, thick, minimum size=.6cm] {\ \ ~}; 
  \coordinate (u2) at (2,0);  
  \coordinate (d21) at (.6,0);  
  \node (d22) at (1,-1.2) {b}; 
  \draw[-angle 45] ($(t1.east) + (0,0)$) to [out=0,in=180] (d21);
  \draw[-o] ($(t1.west) - (0,0)$)  to    [out=180,in=0] (d11); 
  \draw[-o] ($(t1.south) + (0,0)$) to [out=-90,in=90](d12);
  \draw[-triangle 60] ($(t2.east)$) to [out=0,in=180] (u2);  
  \draw[-] ($(t2.west) - (0,0)$) to [out=180,in=0] (d21);
  \draw[-o] ($(t2.south) + (0,0)$) to [out=-90,in=90](d22);
\end{tikzpicture}
\hspace{.1in} -  \hspace{.1in}
\begin{tikzpicture}[baseline=-4pt,scale=1] 
  \node (t1) at (0,0) [rectangle, draw, thick, minimum size=.6cm] {\ \ ~}; 
  \coordinate (u1) at (1,0);  
  \coordinate (d11) at (-1,0);  
  \node (d12) at (0,-1.2){b}; 
  \node (t2) at (1,0) [rectangle, draw, thick, minimum size=.6cm] {\ \ ~}; 
  \coordinate (u2) at (2,0);  
  \coordinate (d21) at (.6,0);  
  \node (d22) at (1,-1.2) {a}; 

  \draw[-angle 45] ($(t1.east) + (0,0)$) to [out=0,in=180] (d21);
  \draw[-o] ($(t1.west) - (0,0)$)  to    [out=180,in=0] (d11); 
  \draw[-o] ($(t1.south) + (0,0)$) to [out=-90,in=90](d12);
  \draw[-triangle 60] ($(t2.east)$) to [out=0,in=180] (u2);  
  \draw[-] ($(t2.west) - (0,0)$) to [out=180,in=0] (d21);
  \draw[-o] ($(t2.south) + (0,0)$) to [out=-90,in=90](d22);
\end{tikzpicture}
\end{equation}
The labels $a$ and $b$ are only to discern between different entries.
\\

\noindent
Perhaps the simplest derived object is a {\bf characteristic one-form} $\chi\in L^*$
the value of which on a vector $v\in L$ is $\chi(v) = - \Tr\hbox{ad}_v$.
Its pictograph is 
$$\chi \ = \ 
\begin{tikzpicture}[baseline=-4pt,scale=.7,
decoration={markings, mark=at position .7 with {\arrow{angle 45}};}]    

  \node (c) at (0,0) [rectangle, draw, thick, minimum size=.5cm] { ~ };
  \node (J) at (-.5, -1.5) { };
  \coordinate (zap1) at (.5,1);  
  \coordinate (zap2) at (1,.5);  
  \coordinate (zap3) at (1,-.5);  
  \coordinate (zap4) at (.5,-1);  
 \draw[-]                                      (c)  to  [out=90,in=180] (zap1)  to  [out=0,in=90] (zap2); ; 
 \draw[postaction={decorate}]  (zap2)   to   (zap3); 
 \draw[-]                                      (zap3)   to  [out=-90,in=0]  (zap4)   to  [out=180,in=-90]    ($(c.south) + (.6em,0)$);

  \draw[-o]   ($(c.south) - (.6em,0)$) to [out=-90,in=90] (J);
\end{tikzpicture}
$$
(This one-form vanishes for semisimple algebras.)

~\\
The {\bf Killing form} is defined as an inner product $K(v,\,w) =  \Tr \hbox{ad}_v\hbox{ad}_w$.
In the diagrammatic script it is easy to define the corresponding 2-covariant tensor $K$
\\

\begin{center}
$K$\hspace{.1in}$=$\hspace{.1in}
\begin{tikzpicture}[baseline=-4pt,scale=.9] 
  \node (t1) at (0,0) [rectangle, draw, thick, minimum size=.6cm] {\ \ ~}; 
  \node (t2) at (1,0) [rectangle, draw, thick, minimum size=.6cm] {\ \ ~}; 
 \node (d1) at (0,-1){}; 
 \node (d2) at (1,-1){}; 
\draw[-o] ($(t1.south) + (0,0)$) to [out=-90,in=90](d1);
\draw[-o] ($(t2.south) + (0,0)$) to [out=-90,in=90](d2);
  \coordinate (zip) at (.6,0);  
  \draw[->] (t1) to (zip);
  \draw[-] (zip) to (t2);
 
  \coordinate (zap1) at (1.6,.4);  
  \coordinate (zap2) at (1,.8);  
  \coordinate (zap3) at (.5,.8);  
  \coordinate (zap4) at (0,.8);  
  \coordinate (zap5) at (-.6,.4);  

 \draw[-] ($(t2.east) - (0,0)$)  to  [out=0,in=-90] (zap1); 
 \draw[-] (zap1)   to   [out=90,in=0] (zap2); 
 \draw[-angle 45] (zap2)   to  [out=180,in=0] (zap3); 
 \draw[-] (zap3)   to  [out=180,in=0] (zap4); 
 \draw[-] (zap4)   to  [out=180,in=90] (zap5); 
 \draw[-] (zap5)   to  [out=-90,in=180] (t1); 
\end{tikzpicture}
\end{center}

\noindent
Every Lie algebra possesses a skew-symmetric exterior {\bf Lie 3-form} $\omega$ that for any 
triple $v,w,z\in L$ takes value
$\omega(v,w,z) = \Tr \ad_{\ldb v,\,w\rdb}\ad_z$.
Using diagrammatic script we may ``draw" the form $\omega$ directly --- 
here it is, simplified with the use of Jacobi identity (\ref{eq:jacobi}): 
\\
\begin{center}
$\omega \ = \ $
\begin{tikzpicture}[baseline=-4pt,scale=.8] 
  \node (t1) at (0,0) [rectangle, draw, thick, minimum size=.5cm] {\ \ ~}; 
  \node (t2) at (1,0) [rectangle, draw, thick, minimum size=.5cm] {\ \ ~}; 
 \coordinate (d1) at (0,-.4){}; 
 \node (d2) at (1,-1){}; 
\draw[-] ($(t1.south) + (0,0)$) to [out=-90,in=90](d1);
\draw[-o] ($(t2.south) + (0,0)$) to [out=-90,in=90](d2);
  \coordinate (zip) at (.6,0);  
  \draw[-angle 45] ($(t1.east) + (0,0)$) to [out=0,in=180] (zip);
  \draw[-] ($(t2.west) - (0,0)$) to [out=180,in=0] (zip);
 
  \coordinate (zap1) at (1.6,.4);  
  \coordinate (zap2) at (1,.8);  
  \coordinate (zap3) at (.5,.8);  
  \coordinate (zap4) at (0,.8);  
  \coordinate (zap5) at (-.6,.4);  

 \draw[-] ($(t2.east) - (0,0)$)  to  [out=0,in=-90] (zap1); 
 \draw[-] (zap1)   to   [out=90,in=0] (zap2); 
 \draw[-angle 45] (zap2)   to  [out=180,in=0] (zap3); 
 \draw[-] (zap3)   to  [out=180,in=0] (zap4); 
 \draw[-] (zap4)   to  [out=180,in=90] (zap5); 
 \draw[-] (zap5)   to  [out=-90,in=180] (t1); 

  \node (t3) at (0,-1) [rectangle, draw, thick, minimum size=.5cm] {\ \ ~}; 
 \node (t3in1) at (-.3,-2){a}; 
 \node (t3in2) at (.3,-2){b};
\draw[-o] ($(t3.south) - (.4em,0)$) to [out=-90,in=90](t3in1);
\draw[-o] ($(t3.south) + (.4em,0)$) to [out=-90,in=90](t3in2);
\draw[-angle 45] (t3) to (d1);
\end{tikzpicture}
\hspace{.1in} $=$  \hspace{.1in}
\begin{tikzpicture}[baseline=-4pt,scale=.9] 
  \node (t1) at (0,0) [rectangle, draw, thick, minimum size=.5cm] {\ \ ~}; 
  \node (t2) at (1,0) [rectangle, draw, thick, minimum size=.5cm] {\ \ ~}; 
  \node (t3) at (2,0) [rectangle, draw, thick, minimum size=.5cm] {\ \ ~}; 
 \node (d1) at (0,-1){a}; 
 \node (d2) at (1,-1){b}; 
 \node (d3) at (2,-1){~}; 
\draw[-o] ($(t1.south) + (0,0)$) to [out=-90,in=90](d1);
\draw[-o] ($(t2.south) + (0,0)$) to [out=-90,in=90](d2);
\draw[-o] ($(t3.south) + (0,0)$) to [out=-90,in=90](d3);
  \coordinate (zip1) at (.6,0);  
  \draw[-angle 45] ($(t1.east) + (0,0)$) to [out=0,in=180] (zip1);
  \draw[-] ($(t2.west) - (0,0)$) to [out=180,in=0] (zip1);
  \coordinate (zip2) at (1.6,0);  
  \draw[-angle 45] ($(t2.east) + (0,0)$) to [out=0,in=180] (zip2);
  \draw[-] ($(t3.west) - (0,0)$) to [out=180,in=0] (zip2);
  \coordinate (zap1) at (2.6,.4);  
  \coordinate (zap2) at (2,.8);  
  \coordinate (zap3) at (1,.8);  
  \coordinate (zap4) at (0,.8);  
  \coordinate (zap5) at (-.6,.4);  
 \draw[-] ($(t3.east) - (0,0)$)  to  [out=0,in=-90] (zap1); 
 \draw[-] (zap1)   to   [out=90,in=0] (zap2); 
 \draw[-angle 45] (zap2)   to  [out=180,in=0] (zap3); 
 \draw[-] (zap3)   to  [out=180,in=0] (zap4); 
 \draw[-] (zap4)   to  [out=180,in=90] (zap5); 
 \draw[-] (zap5)   to  [out=-90,in=180] (t1); 
\end{tikzpicture}
\hspace{.1in} $-$  \hspace{.1in}
\begin{tikzpicture}[baseline=-4pt,scale=.9] 
  \node (t1) at (0,0) [rectangle, draw, thick, minimum size=.5cm] {\ \ ~}; 
  \node (t2) at (1,0) [rectangle, draw, thick, minimum size=.5cm] {\ \ ~}; 
  \node (t3) at (2,0) [rectangle, draw, thick, minimum size=.5cm] {\ \ ~}; 
 \node (d1) at (0,-1){b}; 
 \node (d2) at (1,-1){a}; 
 \node (d3) at (2,-1){~}; 
\draw[-o] ($(t1.south) + (0,0)$) to [out=-90,in=90](d1);
\draw[-o] ($(t2.south) + (0,0)$) to [out=-90,in=90](d2);
\draw[-o] ($(t3.south) + (0,0)$) to [out=-90,in=90](d3);
  \coordinate (zip1) at (.6,0);  
  \draw[-angle 45] ($(t1.east) + (0,0)$) to [out=0,in=180] (zip1);
  \draw[-] ($(t2.west) - (0,0)$) to [out=180,in=0] (zip1);
  \coordinate (zip2) at (1.6,0);  
  \draw[-angle 45] ($(t2.east) + (0,0)$) to [out=0,in=180] (zip2);
  \draw[-] ($(t3.west) - (0,0)$) to [out=180,in=0] (zip2);
  \coordinate (zap1) at (2.6,.4);  
  \coordinate (zap2) at (2,.8);  
  \coordinate (zap3) at (1,.8);  
  \coordinate (zap4) at (0,.8);  
  \coordinate (zap5) at (-.6,.4);  
 \draw[-] ($(t3.east) - (0,0)$)  to  [out=0,in=-90] (zap1); 
 \draw[-] (zap1)   to   [out=90,in=0] (zap2); 
 \draw[-angle 45] (zap2)   to  [out=180,in=0] (zap3); 
 \draw[-] (zap3)   to  [out=180,in=0] (zap4); 
 \draw[-] (zap4)   to  [out=180,in=90] (zap5); 
 \draw[-] (zap5)   to  [out=-90,in=180] (t1); 
\end{tikzpicture}
\end{center}

\noindent
where $a$ and $b$ are merely labels to distinguish the covariant entries.  
If we  use symbol the $\w$ or ``alt" inside a loop to denote the signed sum over all permutations 
of entries of a tensor (skewsymmetrization), then the Lie 3-covariant form is

\begin{center}
\begin{tikzpicture}[baseline=-4pt,scale=.9] 
  \node (t) at (0,0) [rectangle, draw, thick, minimum size=.6cm] {$\omega$};
  \coordinate (d1) at (-.5,-1);
  \coordinate (d2) at (0,-1); 
  \coordinate (d3) at (.5,-1); 

  \draw[-o] ($(t.south) - (.5em,0)$) to [out=-90,in=70] (d1);
  \draw[-o] ($(t.south) + (0,0)$)to[out=-90,in=90](d2);
  \draw[-o] ($(t.south) + (.5em,0)$)to[out=-90,in=120](d3);
\end{tikzpicture}
\hspace{.1in} $\equiv$  \hspace{.1in}
$\frac{1}{3} \ $
\begin{tikzpicture}[baseline=-4pt,scale=.9,
decoration={markings, mark=at position .7 with {\arrow{angle 45}};}]    
  \node (symb) at (.5,.5)  {alt}; 
  \node (t1) at (0,0) [rectangle, draw, thick, minimum size=.5cm] {\ \ ~}; 
  \node (t2) at (1,0) [rectangle, draw, thick, minimum size=.5cm] {\ \ ~}; 
  \node (t3) at (2,0) [rectangle, draw, thick, minimum size=.5cm] {\ \ ~}; 
 \node (d1) at (0,-1){}; 
 \node (d2) at (1,-1){}; 
 \node (d3) at (2,-1){~}; 
\draw[-o] ($(t1.south) + (0,0)$) to [out=-90,in=90](d1);
\draw[-o] ($(t2.south) + (0,0)$) to [out=-90,in=90](d2);
\draw[-o] ($(t3.south) + (0,0)$) to [out=-90,in=90](d3);
  \draw[-angle 45] (t1)  to (t2);
  \draw[-angle 45] (t2)  to (t3);
  \coordinate (zap1) at (2+.3,.8);  
  \coordinate (zap2) at (0-.3,.8);  
 \draw[-] ($(t3.east) - (0,0)$)  to  [out=0,in=0] (zap1); 
 \draw[->] (zap1)   to   (zap2); 
 \draw[-] (zap2)   to  [out=180,in=180] (t1); 
\end{tikzpicture}
\end{center}

\def\curvatureremark{
\noindent
Exploring some more pictograms one quickly arrives at this one:

\begin{center}
\quad $C = $
\begin{tikzpicture}[baseline=-14pt,scale=.9, 
decoration={markings, mark=at position .7 with {\arrow{angle 45}};}]    
  \node (c1) at (1,0) [rectangle, draw, thick, minimum size=.4cm] {\ ~}; 
  \node (c2) at (2,0) [rectangle, draw, thick, minimum size=.4cm] {\ ~}; 
  \node (cc1) at (1,-1) [rectangle, draw, thick, minimum size=.4cm] {\ ~}; 
  \node (cc2) at (2,-1) [rectangle, draw, thick, minimum size=.4cm] {\ ~}; 
  \coordinate (ccd11) at (1-.2, -1.8)   ;
  \coordinate (ccd12) at (1+.2, -1.8); 
  \coordinate (ccd21) at (2-.2, -1.8) ;
  \coordinate (ccd22) at (2+.2, -1.8) ; 
  \draw[postaction={decorate}] (c1) to (c2);
  \coordinate (zap1) at (2+.3,.8);  
  \coordinate (zap2) at (1-.3,.8);  
 \draw[-]                                      (c2)  to  [out=0,in=0] (zap1); 
 \draw[postaction={decorate}]  (zap1)   to   (zap2); 
 \draw[-]                                      (zap2)   to  [out=180,in=180] (c1); 
  \draw[postaction={decorate}] (c1) to (cc1);
  \draw[postaction={decorate}] (c2) to (cc2);
  \draw[-o] ($(cc1.south) - (.3em,0)$)   to [out=-90,in=90] (ccd11);
  \draw[-o] ($(cc1.south) + (.3em,0)$)  to [out=-90,in=90] (ccd12) ;
  \draw[-o] ($(cc2.south) - (.3em,0)$)  to[out=-90,in=90] (ccd21);
  \draw[-o] ($(cc2.south) + (.3em,0)$) to[out=-90,in=90] (ccd22) ;
\end{tikzpicture}
\end{center}
Does it have any significance?  One readily observes that it has symmetry property of a curvature tensor:
it is skewsymmetric with respect to the the order change of the first and of the second couple
of entries, and it is symmetric with respect to the change of the couple.
Quite interestingly, it does actually represent a curvature tensor of the 
Cartan connection on a corresponding Lie group (up to a scalar, depending on 
particular representation).
\\ 

}


\section{Differential geometry on a Lie algebra}  

Let us now look at the differential geometry of Lie algebra viewed as a manifold.
In the diagrammatic language the objects of Theorem \ref{th:2.1} are 
\begin{center}
{\sf Definition:}\quad  $\mathcal{A} =$ \hspace{0in}
%
\begin{tikzpicture}[baseline=-4pt,scale=.6] 
  \node (t) at (0,0) [rectangle, draw, thick, minimum size=.5cm] { ~ };
  \node (J) at (-.7,-2) [rectangle, draw, thick, minimum size=.5cm] { J };
  \coordinate (d1) at  (-.4,-1);
  \coordinate (d2) at  (.5,-2); 
  \coordinate (u) at (0,1.3);
  \draw[-] ($(t.south) - (.6em,0)$) to [out=-90,in=60] (d1);
  \draw[-o] ($(t.south) + (.6em,0)$)to[out=-90,in=90](d2);
  \draw[-angle 45] (v) to [out=90,in=-120] (d1);
  \draw[-triangle 60] ($(t.north) + (0,0)$) to [out=90,in=-90] (u);
\end{tikzpicture}
\qquad
{\sf Theorem:}\quad $\frac{1}{2}[\mathcal{A},\mathcal{A}] =$ \hspace{0in}
%
\begin{tikzpicture}[baseline=-4pt,scale=.6] 
  \node (t1) at (0,0) [rectangle, draw, thick, minimum size=.5cm] { ~ };
  \node (t2) at (.7, -2) [rectangle, draw, thick, minimum size=.5cm] { ~ };
  \node (J) at (-.7,-2) [rectangle, draw, thick, minimum size=.5cm] { J };
  \coordinate (d1) at  (-.4,-1);
  \coordinate (d2) at  (.4,-1);
  \coordinate (dd1) at  (0.2,-3.3);
  \coordinate (dd2) at  (0.9,-3.3);

  \coordinate (u) at (0,1.3);
  \draw[-triangle 60] ($(t1.north) + (0,0)$) to [out=90,in=-90] (u);
  \draw[-] ($(t1.south) - (.6em,0)$) to [out=-90,in=60] (d1);
  \draw[-] ($(t1.south) + (.6em,0)$)to[out=-90,in=120](d2);
  \draw[-angle 45] (J) to [out=90,in=-120] (d1);
  \draw[-angle 45] (t2) to [out=90,in=-60] (d2);
  \draw[-o] ($(t2.south) - (.6em,0)$) to [out=-90,in=60] (dd1);
  \draw[-o] ($(t2.south) + (.6em,0)$)to[out=-90,in=90](dd2);
\end{tikzpicture}
\end{center}

\noindent
Since the contraction with $J$ is introduces dependence on poisition 
(coordinates $x$), we shall use rather notation that will be 
easier perceptually.  Thus, for instance:
\begin{center}
$\mathcal{A} =$ \hspace{0in}
%
\begin{tikzpicture}[baseline=-4pt,scale=.6,, yscale=.9,
decoration={markings, mark=at position .7 with {\arrow{angle 45}};}]    
  \node (t) at (0,0) [rectangle, draw, thick, minimum size=.5cm] { ~ };
  \node (J) at (-.7,-2) [rectangle, draw, thick, minimum size=.5cm] { J };
  \coordinate (d1) at  (-.4,-1);
  \coordinate (d2) at  (.5,-2); 
  \coordinate (u) at (0,1.3);
  \draw[postaction={decorate}] (J) to [out=90,in=-90] ($(t.south) - (.6em,0)$);
  \draw[-o] ($(t.south) + (.6em,0)$)to[out=-90,in=90](d2);
  \draw[-triangle 60] ($(t.north) + (0,0)$) to [out=90,in=-90] (u);
\end{tikzpicture}
\qquad$\equiv$\qquad
%
\begin{tikzpicture}[baseline=-4pt,scale=.6, yscale=.9] 
  \node (t) at (0,0) [rectangle, draw, thick, minimum size=.5cm] { ~ };
  \node (J) at (-.7,-2)  { $x$ };
  \coordinate (d2) at  (.5,-2); 
  \coordinate (u) at (0,1.3);
  \draw[-] ($(t.south) - (.6em,0)$) to [out=-90,in=60] (d1);
  \draw[-o] ($(t.south) + (.6em,0)$)to[out=-90,in=90](d2);
  \draw[-angle 45] (v) to [out=90,in=-120] (d1);
  \draw[-triangle 60] ($(t.north) + (0,0)$) to [out=90,in=-90] (u);
\end{tikzpicture}
\end{center}

%
Every element (vector)  $v\in L$ 
defines a ``constant" vector field  $\wtilde v\in\XX L$  on manifold $L$
obtained by parallel transport; 
in coordinates,, if $v = v^i e_i$ then $\wtilde v = v^i \q_i$.
The canonical endomorphism field $\AA$ on manifold $L$ applied to such fields 
defines a representation of Lie algebra $L$ in terms of vector fields on $L$, namely with every algebra
element $v\in L$, we associate a vector field
\begin{equation}	\label{eq:3.4}
\begin{aligned}
  X_v &= \AA \wtilde v \\
      &= x^i v^j c^k_{ij} \q_k \quad = 
\end{aligned}
\qquad\qquad\qquad
\begin{tikzpicture}[baseline=-4pt,scale=.6, yscale=.9,
decoration={markings, mark=at position .7 with {\arrow{angle 45}};}]    
  \node (t) at (0,0) [rectangle, draw, thick, minimum size=.5cm] { ~ };
  \node (J) at (-.7,-2)  {$x$ };
  \node (v) at  (.5,-2) [rectangle, draw, thick, minimum size=.5cm] {v}; 
  \coordinate (u) at (0,1.3);
  \draw[postaction={decorate}] (J) to [out=90,in=-90] ($(t.south) - (.6em,0)$) ;
  \draw[postaction={decorate}]  (v) to [out=90,in=-90]  ($(t.south) + (.6em,0)$);
  \draw[-triangle 60] (t) to [out=90,in=-90] (u);
\end{tikzpicture}
\end{equation}

\begin{proposition}
\label{prop:homo}
The map $v\to X_v$ defines a homomorphism $\{L, \,\ldb\,\cdot\,,\,\cdot\,\rdb\} \to \{\XX L, [\,\cdot\,,\,\cdot\,]\}$ 
(the {\it infinitesimal representation} of $L$ in terms of $\XX L$):
\begin{equation}	\label{eq:3.5}
  [X_v,X_w] = X_{\ldb v,w\rdb}
\end{equation}
If the center of $L$ is trivial, the map presents a monomorphism. 
\end{proposition}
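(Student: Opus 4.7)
The plan is to verify the bracket identity directly in the linear coordinates $(x^i)$ used in (3.4), since the vector fields $X_v = x^i v^j c^k_{ij}\,\partial_k$ are simply linear in $x$. Applying the standard formula $[X,Y]^k = X^l\partial_l Y^k - Y^l\partial_l X^k$ and exploiting that $\partial_l(x^a w^b c^k_{ab}) = w^b c^k_{lb}$, one finds after relabeling dummy indices
\begin{equation*}
   [X_v,X_w]^k \;=\; x^i v^j w^b\bigl(c^l_{ij}c^k_{lb} - c^l_{ib}c^k_{lj}\bigr).
\end{equation*}

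Next I would invoke the Jacobi identity in its component form $c^l_{ij}c^k_{lm} + c^l_{jm}c^k_{li} + c^l_{mi}c^k_{lj}=0$ (obtained by writing out the double bracket identity (\ref{eq:2.1})(ii) on basis elements). Setting $m=b$ and solving for $c^l_{ij}c^k_{lb}$, the bracketed combination collapses to a single term
\begin{equation*}
   c^l_{ij}c^k_{lb} - c^l_{ib}c^k_{lj} \;=\; -c^l_{jb}c^k_{li} \;=\; c^l_{jb}c^k_{il},
\end{equation*}
using skew-symmetry $c^k_{li} = -c^k_{il}$ at the last step. Substituting back gives $[X_v,X_w]^k = x^i (v^j w^b c^l_{jb}) c^k_{il}$, which is exactly the coordinate expression of $X_{\ldb v,w\rdb}$ since $\ldb v,w\rdb = v^j w^b c^l_{jb}\, e_l$.

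An alternative derivation, worth noting, takes (\ref{eq:2.4}) as a black box: since the constant fields satisfy $[\wtilde v,\wtilde w]=0$, the Nijenhuis bracket of $\AA$ reduces on them to $[\AA,\AA](\wtilde v,\wtilde w) = 2[X_v,X_w] - 2\AA[X_v,\wtilde w] - 2\AA[\wtilde v,X_w]$. A one-line computation shows $[X_v,\wtilde w] = [\wtilde v, X_w] = \wtilde{\ldb v,w\rdb}$, so these correction terms combine into $-4X_{\ldb v,w\rdb}$; comparing with Theorem \ref{th:2.1}, which evaluates the Nijenhuis bracket on $(\wtilde v,\wtilde w)$ as $-2X_{\ldb v,w\rdb}$, yields (\ref{eq:3.5}) with no further work.

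For injectivity, I would argue that $X_v\equiv 0$ as a vector field forces $x^i v^j c^k_{ij}=0$ for all $x\in L$ and all $k$, hence $v^j c^k_{ij}=0$ for every $i,k$, i.e.\ $\ldb e_i,v\rdb = 0$ for every basis vector. This is precisely the condition $v\in Z(L)$, so triviality of the center yields $v=0$. The main obstacle is purely bookkeeping: carrying out the index gymnastics cleanly and selecting the correct cyclic rotation of Jacobi to eliminate the unwanted terms; once the right form is chosen the reduction is immediate.
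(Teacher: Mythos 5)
Your main argument is correct and is exactly the paper's (one-line) proof carried out in full: the paper simply asserts that the identity ``readily follows from the Jacobi identity,'' and your index computation — reducing $c^l_{ij}c^k_{lb}-c^l_{ib}c^k_{lj}$ to $c^l_{jb}c^k_{il}$ via the component Jacobi identity and skew-symmetry — is the intended verification; the injectivity argument via the center is likewise the standard one and is fine. One caveat on your ``alternative derivation'': within this paper it is circular, because the proof of Theorem \ref{th:2.1} itself invokes Proposition \ref{prop:homo} (and Corollary \ref{cor:formulae}) to evaluate $[\AA,\AA](\wtilde v,\wtilde w)$, so deducing $[X_v,X_w]=X_{\ldb v,w\rdb}$ from Theorem \ref{th:2.1} presupposes the very statement being proved; it is legitimate as a consistency check, but the direct coordinate computation must remain the actual proof.
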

\begin{proof} 
The proposition readily  follows from the Jacobi identity.
\end{proof}

\begin{corollary} 	\label{cor:formulae}
The following are convenient formulae
\begin{align*}
   (i) &\qquad  [X_v,\,\wtilde w] = \wtilde{\;\ldb v,w \rdb\;} \\
   (ii)&\qquad \ldb\wtilde v,\wtilde w \rdb = \wtilde{\;\ldb v,w \rdb\;} \\
   (iii) &\qquad  [\wtilde v,\wtilde w ] = 0 
\end{align*}
\end{corollary}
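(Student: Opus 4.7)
The plan is to verify each of the three identities by direct coordinate computation, since the vector fields involved have a particularly simple form in any linear coordinate system adapted to a basis of $L$.

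For identity (iii), $[\wtilde v,\wtilde w]=0$, I would simply observe that in a linear coordinate system $\{x^i\}$ associated with the basis $\{e_i\}$, the fields $\wtilde v = v^i\q_i$ and $\wtilde w = w^j\q_j$ have constant components, so the standard coordinate formula for the Lie bracket of vector fields gives $[\wtilde v,\wtilde w]^k = v^i\q_i w^k - w^j\q_j v^k = 0$. This is the easiest part and serves as a warm-up.

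For identity (ii), $\ldb\wtilde v,\wtilde w\rdb = \widetilde{\ldb v,w\rdb}$, I would interpret the left-hand bracket as the \emph{pointwise} Lie algebra bracket applied via the canonical isomorphisms $\mu_x : T_xL \to L$. Under these identifications, $\mu_x(\wtilde v(x)) = v$ and $\mu_x(\wtilde w(x)) = w$ at every $x\in L$, so $\mu_x\bigl(\ldb\wtilde v,\wtilde w\rdb(x)\bigr) = \ldb v,w\rdb$ independently of $x$. Hence the left-hand side is the constant vector field whose value corresponds to $\ldb v,w\rdb$, which is precisely $\widetilde{\ldb v,w\rdb}$.

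Identity (i) is the only one requiring real computation. I would write $X_v = x^i v^j c^k_{ij}\q_k$ and $\wtilde w = w^m\q_m$, and then apply the coordinate formula $[X_v,\wtilde w]^k = X_v^i\,\q_i w^k - \wtilde w^m\,\q_m X_v^k$. The first term vanishes because the components of $\wtilde w$ are constant; the second term evaluates to $-w^m v^j c^k_{mj}$. Using skew-symmetry of the structure constants, $c^k_{mj}=-c^k_{jm}$, this becomes $v^j w^m c^k_{jm}$, which is exactly $\ldb v,w\rdb^k$ and hence identifies the bracket with $\widetilde{\ldb v,w\rdb}$.

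There is no real obstacle here: the only minor subtlety is the interpretation of $\ldb\,\cdot\,,\,\cdot\,\rdb$ in (ii) as the pointwise algebra bracket of vector-valued maps (as opposed to some derived bracket of vector fields), and the careful bookkeeping of skew-symmetry in (i). No appeal to the Jacobi identity is needed for the corollary itself; Jacobi enters only in Proposition \ref{prop:homo}, which is a stronger statement.
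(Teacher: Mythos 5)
Your computation is correct, and since the paper states this corollary without proof, your direct coordinate verification (constant components for (iii), the pointwise bracket via $\lambda$ for (ii), and the one-term Lie-derivative computation plus skew-symmetry of $c^k_{ij}$ for (i)) is exactly the argument the author leaves implicit. You are also right that only bilinearity and skew-symmetry are needed here; the Jacobi identity enters only in Proposition 3.1.
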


The image of $\AA$ spans at every point a
subspace of the tangent space of $L$, defining in this way a distribution 
\begin{equation}	
\label{eq:3.6}
\DD =  \Im \AA  = \spann\{ X_v \mid v\in L\}
\end{equation}
The integral
manifolds of this distribution coincide with the adjoint orbits
determined by the action of a Lie group on Lie algebra.
Note however that we may define ``adjoint orbits" without reference to the Lie
group simply as the integral manifolds $\OO$ of $\DD$, satisfying $T\OO = \DD$.
\\

Now we prove the theorem.

\begin{proof}[Proof of Theorem 2.1:]
Recall that the Nijenhuis bracket $[K,K]$ of a vector-valued one-form
(endomorphism field) $K$ with itself is a vector-valued bi-form that,
evaluated on two fields $X$ and $Y$, takes the value according to 
\begin{equation}	\label{eq:3.7}
        {1\over 2}[K,K](X,Y) = [KX,KY] -K[KX,Y] -K[X,KY]+K^2[X,Y]
\end{equation}	
(see, e.g., \cite{Mi}).
Evaluating (half of) the Nijenhuis bracket
$[\AA,\,\AA]$ on two constant vector fields $\wtilde v$ and $\wtilde w$
and using formulae of Proposition \ref{prop:homo} and \ref{cor:formulae}, one gets
\begin{align*}
   {1\over 2} [\AA,\AA](\wtilde v,\wtilde w)
            &=[X_v,X_w] - \AA([X_v,\wtilde w]) - \AA([\wtilde v,X_ w]) +0\\
            &=X_{\ldb v,w\rdb} - X_{\ldb v,w \rdb} - X_{\ldb v,w \rdb} \\
            &=-X_{\ldb v,w\rdb}  
\end{align*}
In particular, substitution $X=\q_a$ and $Y=\q_b$ leads to the
coordinate formula \eqref{eq:Abasic}.
Now, let us show that $\AA$ can be restricted to orbits, i.e.,
$
        \AA (T_x\OO)\subset T_x\OO
$
for each point $x\in\OO$. First, rewrite \eqref{eq:2.3} for $X\in T_xL$:
\[
        \AA(X) = \mu_x^{-1}\left(\ldb x,(\mu_x(X)\rdb\right)
\]
Vector of the vector field $X_v$ at point $x\in L$ can be expressed as
\[
     X_v(x) = \AA(v) = \mu_x^{-1}\left(\ldb x, v\rdb\right)
\]
Thus
\[
     \AA(X_v) = \mu_x^{-1}\left(\ldb x,\mu_x (X_v)\rdb\right)
                    = \mu_x^{-1}\left(\ldb x,\ldb x,v \rdb\rdb\right)
                   = X_{\ldb x,v\rdb} \in T_x\OO  
\]
which was to be proven.
\end{proof}

\begin{example}	\label{exm:3.2}
Consider the 2-dimensional solvable algebra defined by
$[e_1,e_2]=e_2$.  Then
\begin{align*}
         \AA &= x_1 \; dx_2 \otimes \q_2 \\
        [\AA,\AA] &= -2 x_1 \; (dx_1\w dx_2) \otimes \q_2 
\end{align*}
The adjoint orbits are lines parallel to $e_2$ and the canonical endomorphism --- when restricted to 
any of them --- becomes a dilation.
\end{example}

\begin{example}	\label{exm:3.3}
The Lie algebra of 3-dimensional rotations, $so_3$, is
defined by relations
$\rdb e_i,e_j \ldb=\varepsilon_{ijk}\,e_k$.  Thus
\begin{align*}
         \AA &= x_1 \; (dx^2 \otimes \q_3 -  dx^3 \otimes \q_2)
                           + \hbox{(cyclic terms)}    \\
        [\AA,\AA] &= dx^1\w dx^2 \otimes (x^1\q_2-x^2\q_1)
                           + \hbox{(cyclic terms)}  
\end{align*}
The orbits are spheres defined by the Killing form.  On the unit sphere, tensor $\AA$
forms an almost complex structure, $\AA\om\AA=-\id$.
\end{example}

\begin{remark}
Although the Nijenhuis bracket (\ref{eq:22.44}) vanishes for two-step nilpotent algebras, 
(including Hei\-sen\-berg-type algebras \cite{Ka, FKS}), in general it does not, 
and therefore endomorphism field $\AA$ is in general not integrable.
Note that for vector fields of infinitesimal representation, the bi-form \eqref{eq:22.44}
takes at any point $x$ a vector-value
\begin{equation}	\label{eq:4.2}
    [\AA,\AA](X_v,X_w)
       =\mu_x^{-1}\om \ldb x,\ldb\ldb x,v\rdb,\ldb x,v \rdb\rdb\rdb
\end{equation}
Thus $\AA$ restricted to an orbit $\OO\subset L$ is (locally)
integrable if $\ldb x,\ldb \ldb x,v \rdb,\ldb x,v\rdb\rdb\rdb=0$ 
for every $x\in\OO$ and every $v,w\in L$.  
This is true for $so(n)$, $n\leq 4$ and for nilpotent
algebras of the upper-triangular $n\times n$ matrices, $n\leq5$.
\end{remark}

\section{Other basic properties of the endomorphism field}  

The fundamental property of the canonical endomorphism field (Theorem \ref{th:2.1}) is 
\[
              [\AA,\AA]=-2\lambda\jjj\AA
\]
Other basic properties of the geometry of a Lie algebra are summarized below:

\begin{corollary}	\label{cor: 4.2} 
The endomorphism field on a Lie algebra satisfies:
\begin{equation}	\label{eq:4.3}
\begin{aligned}
      (i)&\quad  \Ld_J \AA=\AA\\
     (ii)&\quad  J\jjj \AA=0       \\
    (iii)&\quad  \Im \AA\big|_\OO\cong \Im\ad_x^2  \\
     (iv)&\quad  \Ker \AA\big|_\OO \cong \Ker\ad_x \cap \Im\ad_x
\end{aligned}
\end{equation}
\end{corollary}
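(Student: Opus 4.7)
The plan is to reduce all four claims to elementary linear-algebraic statements about $\ad_x$, using the natural identification $\mu_x:T_xL\to L$ (as in the proof of Theorem~\ref{th:2.1}) under which $\AA_x$ becomes $\ad_x$ on $L$. Items (i) and (ii) are most transparent in linear coordinates $\{x^i\}$, while (iii) and (iv) are easiest pointwise in intrinsic form.

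For (i), I would compute directly from $\AA=x^i c^k_{ij}\,dx^j\otimes\q_k$ with $J=x^i\q_i$. Using $\Ld_J x^i=x^i$, $\Ld_J dx^j=dx^j$, and $\Ld_J\q_k=-\q_k$, the three Leibniz summands contribute $\AA+\AA-\AA=\AA$. Equivalently, $\AA$ is manifestly of degree one in $x$ and $J$ is the Euler field measuring that weight. For (ii), $J\jjj\AA = c^k_{ij}x^i x^j\,\q_k = 0$ by skew-symmetry of $c$ in its lower indices; intrinsically this is the statement $\AA_x(x) = \ldb x,x\rdb = 0$.

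For (iii) and (iv), I would invoke two facts: first, $\mu_x$ conjugates $\AA_x$ to $\ad_x:L\to L$; second, the tangent space to the adjoint orbit is $T_x\OO \cong \Im\ad_x$, which is the content of \eqref{eq:3.6} together with the last paragraph of the proof of Theorem~\ref{th:2.1}. Restricting $\AA_x$ to $T_x\OO$ and computing its image gives $\AA_x(T_x\OO) \cong \ad_x(\Im\ad_x) = \Im\ad_x^2$, which is (iii); computing its kernel gives $\{v\in T_x\OO : \AA_x(v)=0\} \cong \Im\ad_x \cap \Ker\ad_x$, which is (iv).

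No genuine obstacle arises; each item is either a direct coordinate computation or a two-step linear-algebra identity after the canonical identification with $\ad_x$. The only care needed is to record that the ``$\cong$'' symbols in (iii) and (iv) are the canonical fiberwise isomorphisms induced by $\mu_x$, not merely abstract isomorphisms of vector spaces.
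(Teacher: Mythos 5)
Your proof is correct; the paper actually states Corollary~\ref{cor: 4.2} without any proof, so there is no argument to compare against, but your reductions are exactly the intended ones: (i) is the Euler-field homogeneity count ($\AA$ has degree-one coefficients and the $dx^j$ and $\q_k$ weights cancel), (ii) is $\ldb x,x\rdb=0$ via skew-symmetry of $c^k_{ij}$, and (iii)--(iv) follow from $T_x\OO\cong\Im\ad_x$ together with the identification $\mu_x\om\AA_x\om\mu_x^{-1}=\ad_x$. Your closing caveat about the meaning of ``$\cong$'' is also the right thing to record, since the paper leaves it implicit.
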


Here is a property analogous to the coadjoint representation preserving
the Kirillov-Poisson structure on the dual Lie algebra.

\begin{proposition}	\label{prp: 4.3} 
The endomorphism $\AA$ is preserved by the action of the adjoint
representation, that is
\begin{equation}
      \Ld_{X_v} \AA=0  \qquad\forall v\in L
\end{equation}
\end{proposition}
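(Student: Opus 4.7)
The plan is to verify $\mathcal{L}_{X_v}\AA = 0$ pointwise by testing on a global frame of vector fields. Since $\AA$ is a $(1,1)$-tensor, $\mathcal{L}_{X_v}\AA$ is again a $(1,1)$-tensor and satisfies the standard identity
\[
   (\mathcal{L}_{X_v}\AA)(Y) \;=\; [X_v,\AA Y] \,-\, \AA[X_v,Y]
\]
for every vector field $Y$ on $L$. Being tensorial in $Y$, it suffices to check the vanishing on a frame, and the obvious choice is the family of constant vector fields $\wtilde w$ with $w$ ranging over a basis of $L$, since these span $T_xL$ at every point.

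Next I would substitute $Y=\wtilde w$ and simplify using the formulas already established. By the definition of the infinitesimal representation, $\AA\wtilde w = X_w$. By Corollary~\ref{cor:formulae}(i), $[X_v,\wtilde w]=\widetilde{\ldb v,w\rdb}$, hence $\AA[X_v,\wtilde w] = \AA\,\widetilde{\ldb v,w\rdb} = X_{\ldb v,w\rdb}$. On the other hand, Proposition~\ref{prop:homo} gives $[X_v,\AA\wtilde w] = [X_v,X_w] = X_{\ldb v,w\rdb}$. Therefore
\[
   (\mathcal{L}_{X_v}\AA)(\wtilde w) \;=\; X_{\ldb v,w\rdb} - X_{\ldb v,w\rdb} \;=\; 0 .
\]
Since this holds for every $w$ in a basis of $L$, and the $\wtilde w$'s frame $TL$, the tensor $\mathcal{L}_{X_v}\AA$ vanishes identically.

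There is essentially no hard step here: once one recognizes that the two formulas of Corollary~\ref{cor:formulae} together with the homomorphism property of Proposition~\ref{prop:homo} are precisely the two ingredients entering $(\mathcal{L}_{X_v}\AA)(\wtilde w)$, the cancellation is immediate. The only subtlety worth mentioning is the tensoriality argument allowing one to restrict to constant frames; this is automatic because $\mathcal{L}_{X_v}\AA$ is a genuine $(1,1)$-tensor field, so its value at $x$ only depends on $\wtilde w(x)$, and $\{\wtilde w(x): w\in L\}$ already exhausts $T_xL$. One could also record the computation in the diagrammatic language of Section~3: the pictogram for $\AA\circ X_v$ and that for $X_v$ applied through $\AA$ differ by a Jacobi rearrangement, which is exactly what produces the cancellation above.
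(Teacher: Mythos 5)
Your argument is correct and is essentially the same as the paper's: both apply the Leibniz rule $(\Ld_{X_v}\AA)(\wtilde w)=[X_v,\AA\wtilde w]-\AA[X_v,\wtilde w]$ to constant fields $\wtilde w$ and cancel $X_{\ldb v,w\rdb}$ against $X_{\ldb v,w\rdb}$ via Proposition~\ref{prop:homo} and Corollary~\ref{cor:formulae}. Your explicit remark on tensoriality (why checking on the constant frame suffices) is a small but welcome addition that the paper leaves implicit.
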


\begin{proof} 
Use Leibniz rule to show that $(\Ld_{X_v} A)(w)=0$ for
every $w$:
$(\Ld_{X_v} A)(w) = \Ld_{X_v} (A(w))- A \Ld_{X_v} w=
\Ld_{X_v} X_w- A \ldb v, w \rdb =
X_{\ldb v, w\rdb} - X_{\ldb v, w \rdb} = 0$.
\end{proof}

\begin{proposition}	\label{prp: 4.4} 
The endomorphism field on a Lie algebra satisfies:
\begin{equation}	\label{eq:4.5}
\begin{aligned}
      (i)&\quad  \Tr(\AA\om \AA)=K(J,J) \\
       (ii)&\quad  \Tr(\AA)=\chi(J)       \\
      (iii)&\quad  K(\AA v,w)=-K(v,\AA w) 
\end{aligned}
\end{equation}
where the objects are as follows: $K$ is the Killing form defined for two
vectors as $K(v,w)=\Tr\ad_v\om\ad_w$.  When evaluated for $(J,J)$, it
becomes a quadratic scalar function $K(J,J)=x^ax^b c^k_{ai} c^i_{bk}$.
Similarly, $\chi\in L^*$ is a {\it characteristic form} on $L$ defined
$\chi(v)=\Tr\ad_v$.    
Property (iii) states that
the endomorphism $\AA$ is skew-symmetric with respect to the Killing
(possibly degenerated) scalar product.
\end{proposition}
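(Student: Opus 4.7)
The plan is to reduce each identity to a classical fact about $\ad$, using the pointwise identification $\AA_x = \ad_x$ established in equation \eqref{eq:2.3}. Under the canonical isomorphism $\mu_x : T_xL \to L$, the Liouville field $J = x^i\q_i$ satisfies $\mu_x(J(x)) = x$, so ``evaluating $J$ at $x$'' just returns the point $x$ itself. This is the only device I will need.

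First I would prove (ii). By \eqref{eq:2.3}, the matrix of $\AA$ at $x$ coincides with the matrix of $\ad_x$. Hence $\Tr \AA_x = \Tr \ad_x = \chi(x) = \chi(J(x))$, where the first equality is the definition of $\AA$, the second is the definition of $\chi$ in Proposition \ref{prp: 4.4}, and the third uses $\mu_x(J(x)) = x$. Part (i) proceeds identically: $\Tr(\AA_x\om\AA_x) = \Tr(\ad_x\om\ad_x) = K(x,x) = K(J,J)(x)$, directly from the definition of the Killing form. Both of these can also be verified instantly in coordinates from \eqref{eq:Abasic}: the matrix entries $\AA^k_j(x) = x^i c^k_{ij}$ give $\Tr\AA = x^i c^k_{ik} = \chi(J)$ and $\Tr(\AA^2) = x^a x^b c^k_{ai} c^i_{bk} = K(J,J)$, matching the formula already displayed in the statement.

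For (iii), the desired identity is, at the point $x$, the ad-invariance of the Killing form:
\[
   K(\ldb x,v\rdb, w) = -K(v, \ldb x,w\rdb).
\]
This is the standard consequence of the Jacobi identity: expanding $K(\ldb x,v\rdb, w) + K(v, \ldb x,w\rdb)$ as a trace of composed $\ad$'s, applying $\ad_{\ldb x,v\rdb} = [\ad_x,\ad_v]$, and using the cyclic invariance of the trace (the ``mantra'' noted in Section 3) makes the two terms cancel. Since $\AA v = \ldb x, v\rdb$ and $\AA w = \ldb x, w\rdb$ at $x$, this is exactly $K(\AA v, w) = -K(v, \AA w)$.

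There is no serious obstacle; the only thing to be careful about is keeping track of the two hats a vector wears at each point (as an element of $L$ and as a tangent vector in $T_xL$) so that the identifications $\mu_x$, the evaluation of $J$, and the action of $\AA$ all line up. Once the pointwise identity $\AA_x = \ad_x$ is in hand, (i) and (ii) are tautologies and (iii) is the classical skew-symmetry of $\ad_x$ with respect to $K$.
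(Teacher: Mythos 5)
Your proof is correct; the paper in fact states Proposition \ref{prp: 4.4} without any proof, so there is nothing to compare against, and your argument is the natural one: the pointwise identification $\AA_x=\ad_x$ from \eqref{eq:2.3} makes (i) and (ii) immediate (and the coordinate check $\Tr\AA = x^i c^k_{ik}$, $\Tr\AA^2 = x^ax^bc^k_{ai}c^i_{bk}$ confirms them), while (iii) is exactly the $\ad$-invariance of the Killing form, obtained from $\ad_{\ldb x,v\rdb}=[\ad_x,\ad_v]$ and cyclicity of the trace. One small point worth flagging: the paper defines $\chi(v)=-\Tr\ad_v$ in Section~3 but $\chi(v)=\Tr\ad_v$ in the statement of the proposition; you have (reasonably) used the local definition, under which (ii) holds, whereas with the Section~3 sign convention (ii) would read $\Tr\AA=-\chi(J)$.
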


The endomorphism defines for every $k=1,2,\ldots,$ a scalar function of
the power trace
\begin{equation}	\label{eq:4.6}
                  I_k=\Tr \AA^k  = \hbox{Tr}\,(\hbox{ad}_x\om\ldots\om\hbox{ad}_x)
\end{equation}
that will be called {\it Casimir polynomials} on $L$.
In the diagrammatical language they are:

\begin{center}
$I_1 = $
\begin{tikzpicture}[baseline=-14pt,scale=.7, yscale=.8,
decoration={markings, mark=at position .7 with {\arrow{angle 45}};}]    
  \node (c1) at (1,-.5) [rectangle, draw, thick, minimum size=.4cm] {\ ~}; 
  \node (J1) at (1, -2)  {$x$};
  \draw[postaction={decorate}] (J1) to (c1);
  \coordinate (zap1) at (1+.3,.3);  
  \coordinate (zap2) at (1-.3,.3);  
 \draw[-]                                      (c1)  to  [out=0,in=0] (zap1); 
 \draw[postaction={decorate}]  (zap2)   to   (zap1); 
 \draw[-]                                      (zap2)   to  [out=180,in=180] (c1); 
\end{tikzpicture}
\quad $I_2 = $
\begin{tikzpicture}[baseline=-14pt,scale=.7, yscale=.8,
decoration={markings, mark=at position .75 with {\arrow{angle 45}};}]    
  \node (c1) at (1,-.5) [rectangle, draw, thick, minimum size=.4cm] {\ ~}; 
  \node (c2) at (2,-.5) [rectangle, draw, thick, minimum size=.4cm] {\ ~}; 
  \node (J1) at (1, -2)  {$x$};
  \node (J2) at (2, -2) {$x$}; 
  \draw[postaction={decorate}] (J1) to (c1);
  \draw[postaction={decorate}] (J2) to  (c2) ;
  \draw[postaction={decorate}] (c2) to (c1);
  \coordinate (zap1) at (2+.3,.3);  
  \coordinate (zap2) at (1-.3,.3);  
 \draw[-]                                      (c2)  to  [out=0,in=0] (zap1); 
 \draw[postaction={decorate}]  (zap2)   to   (zap1); 
 \draw[-]                                      (zap2)   to  [out=180,in=180] (c1); 
\end{tikzpicture}
\quad$I_3 = $
\begin{tikzpicture}[baseline=-14pt,scale=.7, yscale=.8,
decoration={markings, mark=at position .7 with {\arrow{angle 45}};}]    
  \node (c1) at (1,-.5) [rectangle, draw, thick, minimum size=.4cm] {\ ~}; 
  \node (c2) at (2,-.5) [rectangle, draw, thick, minimum size=.4cm] {\ ~}; 
  \node (c3) at (3,-.5) [rectangle, draw, thick, minimum size=.4cm] {\ ~}; 
  \node (J1) at (1, -2)  {$x$};
  \node (J2) at (2, -2) {$x$}; 
  \node (J3) at (3, -2) {$x$ }; 
  \draw[postaction={decorate}] (J1) to (c1);
  \draw[postaction={decorate}] (J2) to  (c2) ;
 \draw[postaction={decorate}] (J3) to  (c3) ;
  \draw[postaction={decorate}] (c2) to (c1);
  \draw[postaction={decorate}] (c3) to (c2);
  \coordinate (zap1) at (3+.3,.3);  
  \coordinate (zap2) at (1-.3,.3);  
 \draw[-]                                      (c3)  to  [out=0,in=0] (zap1); 
 \draw[postaction={decorate}]  (zap2)   to   (zap1); 
 \draw[-]                                      (zap2)   to  [out=180,in=180] (c1); 
\end{tikzpicture}
\quad $I_4 = $
\begin{tikzpicture}[baseline=-14pt,scale=.7, yscale=.8,
decoration={markings, mark=at position .7 with {\arrow{angle 45}};}]    
  \node (c1) at (1,-.5) [rectangle, draw, thick, minimum size=.4cm] {\ ~}; 
  \node (c2) at (2,-.5) [rectangle, draw, thick, minimum size=.4cm] {\ ~}; 
  \node (c3) at (3,-.5) [rectangle, draw, thick, minimum size=.4cm] {\ ~}; 
  \node (c4) at (4,-.5) [rectangle, draw, thick, minimum size=.4cm] {\ ~}; 
  \node (J1) at (1, -2)  {$x$ };
  \node (J2) at (2, -2) {$x$}; 
  \node (J3) at (3, -2) {$x$ }; 
  \node (J4) at (4, -2) {$x$ }; 
  \draw[postaction={decorate}] (J1) to (c1);
  \draw[postaction={decorate}] (J2) to  (c2) ;
 \draw[postaction={decorate}] (J3) to  (c3) ;
 \draw[postaction={decorate}] (J4) to  (c4) ;
  \draw[postaction={decorate}] (c2) to (c1);
  \draw[postaction={decorate}] (c3) to (c2);
  \draw[postaction={decorate}] (c4) to (c3);
  \coordinate (zap1) at (4+.3,.3);  
  \coordinate (zap2) at (1-.3,.3);  
 \draw[-]                                      (c4)  to  [out=0,in=0] (zap1); 
 \draw[postaction={decorate}]  (zap2)   to   (zap1); 
 \draw[-]                                      (zap2)   to  [out=180,in=180] (c1); 
\end{tikzpicture}
\end{center}
~\\
etc. 
Clearly, the second invariant is a quadratic function related to Killing form and will be denoted 
$\kappa = I_2 = K(J,J) = \kappa$,
but the third is obviously not related to the Lie 3-form.

\begin{corollary}	\label{cor:4.4}  
Differentials of the trace functions are among the annihilators of $\AA$, i.e.,
\begin{equation}	\label{eq:4.7}
                  \AA\jjj dI_k = 0      
\end{equation}
\end{corollary}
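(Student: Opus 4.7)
The plan is to reduce the statement to the infinitesimal adjoint invariance of the trace powers, using Proposition \ref{prp: 4.3}. The key observation is that $\AA\jjj dI_k$ is the one-form $Y\mapsto dI_k(\AA Y)$, so it suffices to show that $dI_k$ vanishes on every vector in $\Im\AA$. Because the image distribution $\DD = \Im\AA$ is spanned pointwise by the infinitesimal-representation vector fields $X_v = \AA\wtilde v$ (equation \eqref{eq:3.4} and \eqref{eq:3.6}), the claim reduces to showing
\[
   X_v(I_k) = 0 \qquad \text{for every } v\in L.
\]

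I would then derive this from $\Ld_{X_v}\AA = 0$ (Proposition \ref{prp: 4.3}). Since Lie derivative is a derivation compatible with tensor composition, and trace is natural with respect to it, one has
\[
  \Ld_{X_v} I_k \;=\; \Ld_{X_v}\Tr(\AA^k) \;=\; \Tr\bigl(\Ld_{X_v}\AA^k\bigr) \;=\; \sum_{j=0}^{k-1}\Tr\bigl(\AA^{j}\,(\Ld_{X_v}\AA)\,\AA^{k-1-j}\bigr) \;=\; 0.
\]
Because $\Ld_{X_v} I_k = X_v(I_k) = dI_k(X_v)$, this gives the desired vanishing on the spanning set $\{X_v\}_{v\in L}$ of $\Im\AA$, and hence $\AA\jjj dI_k = 0$ as a one-form.

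The only real step requiring care is the interchange of Lie derivative and trace; I would either invoke the naturality of trace (it is a contraction, which commutes with any Lie derivative), or verify it directly in coordinates using $I_k = \AA^{i_1}_{i_2}\AA^{i_2}_{i_3}\cdots\AA^{i_k}_{i_1}$ and Leibniz. A fully coordinate-based alternative is also short: with $\AA^k_j(x) = x^i c^k_{ij}$ one computes
\[
   X_v(I_k) \;=\; X_v^a \,\partial_a\bigl(\AA^{i_1}_{i_2}\cdots\AA^{i_k}_{i_1}\bigr),
\]
where $X_v^a = x^p v^q c^{a}_{pq}$, and the Jacobi identity \eqref{eq:jacobi} collapses the resulting sum via cyclic symmetry of the trace, in exactly the graphical manner used throughout the paper.

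The main (and essentially only) obstacle is conceptual rather than computational: one must recognize that $\AA\jjj dI_k = 0$ is equivalent to $\Ld_{X_v} I_k = 0$, i.e.\ to the constancy of $I_k$ on adjoint orbits. Once this dictionary is in place, the corollary is an immediate consequence of the already-proven invariance $\Ld_{X_v}\AA = 0$.
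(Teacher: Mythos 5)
Your argument is correct; there is no gap. The step you were worried about, interchanging $\Ld_{X_v}$ with $\Tr$, is just the standard fact that Lie derivatives commute with contractions, and the reduction $\AA\jjj dI_k=0 \iff dI_k|_{\Im\AA}=0 \iff X_v(I_k)=0$ for all $v\in L$ is exactly right, since by \eqref{eq:3.6} the image of $\AA_x$ is spanned by the vectors $X_v(x)$. However, your route is genuinely different from the one the paper takes. The paper never writes out a proof of Corollary \ref{cor:4.4}; the argument it actually supplies is the graphical proof of the equivalent statement in Corollary \ref{cor:6.2}, which is a direct computation: $dI_k(\AA Y)$ is $k\,\Tr(\ad_{\ldb x,Y\rdb}\om\ad_x^{k-1})$ by cyclicity, the Jacobi identity \eqref{eq:jacobi} replaces $\ad_{\ldb x,Y\rdb}$ by the commutator $[\ad_x,\ad_Y]$ (equivalently, produces the alternating Lie form with two equal entries $x$), and cyclic invariance of the trace then kills the expression. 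You instead route everything through the adjoint-invariance $\Ld_{X_v}\AA=0$ of Proposition 4.3, obtaining $X_v(I_k)=\Tr\bigl(\sum_j \AA^j(\Ld_{X_v}\AA)\AA^{k-1-j}\bigr)=0$ for free. Both arguments ultimately rest on the Jacobi identity (yours via Proposition \ref{prop:homo}, which feeds Proposition 4.3), but yours is the more conceptual one: invariance of the natural tensor $\AA$ under the adjoint flow forces invariance of every scalar invariant built functorially from it, so the same proof covers any $\mathrm{Ad}$-invariant polynomial of $\AA$, not just the power traces. What the paper's computation buys in exchange is self-containedness and an explicit exhibition of the mechanism (cyclic trace plus antisymmetry) in the diagrammatic calculus it is advertising.
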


\newpage
\section{The endomorphism field and dynamical systems}  

Since the dual Lie algebra $L^*$ with its Poisson structure has deep connections 
with classical mechanics, namely with Hamiltonian formalism, 
one may expect that so does a Lie algebra with its 
endomorphism field $\AA$.
The candidate coming to mind first is Lagrangian mechanics,
as suggested by this chain of correspondences:
\\

\begin{tabular}{ccccc}
KKS theorem       & $\to$   &symplectic & $\to$ &Hamilton     \\
(Lie coalgebras)  &             &geometry  &          &equations   \\[8pt]
Theorem 2.1      & $\to$    &endomorphic &$\to$ &?     \\
(Lie algebras)     &            &geometry    &            
\end{tabular}
\\

\noindent
Duality between tangent bundle $TQ$ over a manifold $M$, 
which possesses enough structure so that any
(``regular'') function $\mathcal L$ on $TQ$ defines a dynamical system via Lagrange equations,
and the cotangent bundle $T^*M$ with its own symplectic structure
$\omega$ granting a Hamiltonian formalism induced by the Hamiltonian $\mathcal H$,
suggests that the question mark in the above diagram of analogies should be replaced by 
some sort of Lagrange formalism.
This guess may be supported by the fact 
that the Lagrange formalism is actually based on the natural endomorphism field 
on the tangent fiber bundle (see Appendix B).

Yet it seems that the most direct 
formalism at the question mark seems  --- much generalized -- Lax equations of motion.

Although Lax equations are typically defined as matrix equations,
the endomorphism $\AA$ allows one to geometrize it in a new way.
In the next sections we shall discuss ``Lax vector fields" on a Lie algebra and
will push the analogy with symplectic geometry to see how far it goes.

We show that, quite pleasantly,  ``Lax vector fields''
form a closed subalgebra under vector field commutator. We shall also
define a new ``Poisson bracket'' in the space of vector fields on Lie
algebra, and  prove a homomorphism between Lie algebra of vector fields with 
this bracket with the standard Lie algebra of vector field.  


\section{The algebra of Lax vector fields}  

Let us start with a general construction.
By analogy to symplectic geometry dealing with manifolds equipped with symplectic structure, $\{M,\,\omega\}$,
we may consider a pair $\{M, \AA\}$ where manifold $M$ is equipped with a structure defined by 
a field of endomorphisms -- (1,1)-variant tensor field on $M$.
Exploring further the analogy, we may study dynamical systems described by vector fields that 
are defined by their ``potentials"  --- other vector-fields.
Thus, instead of Hamilton equations, we have a map
\begin{equation}
\label{eq:liemanifold}
\XX M   \to\  \XX M: \qquad B\  \to\  X_B = B\jjj\AA\ \equiv \ \AA B  .
\end{equation}
This contrast with symplectic geometry, where the potentials of dynamical systems are differential  forms,
namely differentials of Hamiltonians.  It would be natural to require 
that the set of all such dynamical systems ,
$\XX_\AA M =\{\AA Y | Y\in\XX M \}$,
be closed under the Lie bracket of vector fields.  This way it would form a subalgebra  
of $\{ \XX M,\,[\, . \, ,\, .\, ] \}$.
The final demand would be to have a well-defined product of vector fields (potentials) such that the map 
(\ref{eq:liemanifold}) is a homomorphism of the corresponding algebras.

One may ask why one would want to replace one vector field by another: one gain may be that in 
the new form some integrals of motion may be found more easily.

In this section we show that a Lie algebra with the endomorphism field defined
in the previous sections forms such a system. In particular, it is equipped with a bracket for potentials 
that we define below.  
\\ 

Consider the underlying linear space $L$ of a Lie algebra
$\{L,\ldb\;,\;\rdb\}$ as a manifold.  Any smooth
vector field $B$ can be viewed as a generator (or ``{\bf potential}") of a
dynamical system defined by vector field $X_B$ defined
\begin{equation}	\label{eq:6.1}
                     X_B= \AA B
\end{equation}
The integral curves of $X_B$ satisfy the Lax equations, which in 
a somewhat imprecise way are expressed
\[
                    \dot x(t) = [x,B_x]
\]
where the $x$ on the left side is understood as a point in $L$, 
while the $x$ inside the bracket on the right side as a vector in $L$.
More accurately, 
\begin{align*}
              \dot c(t) & = [J_{c(t)}, B_{c(t)}] \\
                        & = \AA\om B\om c(t) 
\end{align*}

\begin{definition}	\label{defn:6.1}
Vector fields on a Lie algebra $L$ of form \eqref{eq:6.1} 
will be called {\bf Lax vector fields} generated by $B$, or Lax dynamical systems.
In the diagrammatic representation, the Lax vector field is:
\begin{equation}
X_B = B\jjj \mathcal{A} = \hspace{0in}
\begin{tikzpicture}[baseline=-4pt,scale=.6,, yscale=.9,
decoration={markings, mark=at position .7 with {\arrow{angle 45}};}]    
  \node (t) at (0,0) [rectangle, draw, thick, minimum size=.5cm] { ~ };
  \node (J) at (-.7,-2) {$x$};
  \node (B) at (.5,-2) [rectangle, draw, thick, minimum size=.5cm] {B };
  \coordinate (u) at (0,1.3);
  \draw[postaction={decorate}] (J) to [out=90,in=-90] ($(t.south) - (.6em,0)$);
  \draw[postaction={decorate}] (B) to [out=90,in=-90] ($(t.south) + (.6em,0)$);
  \draw[-triangle 60] ($(t.north) + (0,0)$) to [out=90,in=-90] (u);
\end{tikzpicture}
\end{equation}
The space of Lax vector fields will be denoted by
$\XX_\AA L =\AA(\XX L) \subset \XX L$.
\end{definition}

A simple and a well-known fact is the existence of Casimir invariants:

\begin{corollary}	\label{cor:6.2}
The dynamical system defined by a Lax vector field \eqref{eq:6.1} leaves
Casimir polynomials $I_k$ invariant, $X_B I_k=0$, for any $B\in \XX L$.
\end{corollary}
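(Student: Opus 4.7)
The plan is to reduce the corollary to Corollary 4.4, which asserts $\AA\jjj dI_k=0$. First I would rewrite the invariance condition as a pairing between a one-form and a vector field, $X_B I_k = \langle dI_k, X_B\rangle$, and then substitute the defining formula $X_B = \AA B$ of a Lax vector field to obtain $X_B I_k = dI_k(\AA B)$.

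Next I would interpret the symbol $\AA\jjj dI_k$ carefully. Writing $\AA = \AA^a_b\,dx^b \otimes \partial_a$ locally, the contraction with the one-form $dI_k = (\partial_a I_k)\,dx^a$ saturates the contravariant slot $\partial_a$ of $\AA$, producing the one-form with components $\AA^a_b\,\partial_a I_k$. Paired with an arbitrary vector $Y = Y^b\partial_b$ this gives $\AA^a_b Y^b \partial_a I_k = dI_k(\AA Y)$. Hence Corollary 4.4 is exactly the identity $dI_k(\AA Y) = 0$ for every vector field $Y$. Specializing to $Y = B$ yields $dI_k(\AA B) = 0$, and combined with the first paragraph this gives $X_B I_k = 0$ as claimed.

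The argument is essentially immediate once Corollary 4.4 is in hand; the only point requiring attention is the translation between the contraction notation $\AA\jjj dI_k$ and the transpose action of $\AA$ on one-forms, which is a routine index check. In geometric terms, Lax vector fields lie pointwise in the image of $\AA$, and the Casimir differentials $dI_k$ annihilate that image by Corollary 4.4, so the vanishing $X_B I_k = 0$ holds pointwise for every choice of potential $B \in \XX L$. There is no genuine obstacle, which reflects the fact that the real content lies in Corollary 4.4 and ultimately in the cyclic invariance of the trace applied to $I_k = \Tr\AA^k$.
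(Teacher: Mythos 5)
Your reduction is formally correct: $X_B I_k = dI_k(\AA B)$, and the contraction $\AA\jjj dI_k$ is exactly the one-form $Y\mapsto dI_k(\AA Y)$, so Corollary 4.4 gives the claim immediately. But be aware that this makes Corollary 6.2 and Corollary 4.4 literally the same assertion in two notations --- $\AA\jjj dI_k=0$ says precisely that $dI_k$ annihilates $\Im\AA$, i.e.\ that $X_BI_k=0$ for every potential $B$ --- and the paper states Corollary 4.4 \emph{without proof}. The paper's proof of Corollary 6.2 is where the actual mathematical content is supplied: it computes $X_BI_2$ graphically for $I_2=K(J,J)$, uses the Jacobi identity to reorganize the resulting contraction into the alternating Lie $3$-form evaluated with the repeated argument $x$ in two slots, and concludes that it vanishes by skew-symmetry; the same mechanism (Jacobi identity plus cyclic invariance of the trace) handles every $I_k$. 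So your route buys brevity at the price of deferring all the substance to a prior statement that the paper itself leaves unproved; as a self-contained argument it is close to circular. To close the loop you would add the short computation your last sentence alludes to: $\partial_a I_k = k\,\Tr(\ad_{e_a}\om\ad_x^{\,k-1})$, hence $\AA^a_b\,\partial_a I_k = k\,\Tr(\ad_{\ldb x,e_b\rdb}\om\ad_x^{\,k-1}) = k\,\Tr([\ad_x,\ad_{e_b}]\om\ad_x^{\,k-1}) = 0$ by the Jacobi identity and the cyclicity of $\Tr$. That one line is the paper's graphical argument in index form, and it proves Corollary 4.4 and Corollary 6.2 simultaneously.
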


\noindent
{\bf Proof:} (graphical)  We show the reasoning for $I_2=K(J,J)$
(quadratic polynomials defined by Killing form): 

\begin{center}
$X_B I_2 = (X_B\otimes J) \jjj I_2 = 2 \times \ $
\begin{tikzpicture}[baseline=-14pt,scale=.7,
decoration={markings, mark=at position .7 with {\arrow{angle 45}};}]    
  \node (JJ) at (1.8, -1.5) { }; 
  \node (JJJ) at (0, -2) {$x$};
  \node (c1) at (.8,-.5) [rectangle, draw, thick, minimum size=.4cm] {\ ~}; 
  \node (c2) at (1.8,-.5) [rectangle, draw, thick, minimum size=.4cm] {\ ~}; 
  \draw[postaction={decorate}] (JJJ) to [out=90,in=-90] (c1);
\draw[-o] (c2) to (JJ) ;
  \draw[postaction={decorate}] (c1) to (c2);
  \coordinate (zap1) at (1.8+.3,.3);  
  \coordinate (zap2) at (.8-.3,.3);  
 \draw[-]                                      (c2)  to  [out=0,in=0] (zap1); 
 \draw[postaction={decorate}]  (zap1)   to   (zap2); 
 \draw[-]                                      (zap2)   to  [out=180,in=180] (c1); 
  \node (qc) at (0+1.8, 0-2.2) [rectangle, draw, thick, minimum size=.4cm] { ~ };
  \node (qJ) at (-.6+1.8, -2-1.8) {$x$ };
  \node (B) at (.7+1.8, -2-1.8) [rectangle, draw,  minimum size=.4cm] { B };
  \coordinate (qu) at (0+1.8,1.3-2.2-.4);
  \draw[-triangle 60] (qc) to (qu);
  \draw[postaction={decorate}] (qJ) to  [out=90,in=-90]  ($(qc.south) - (.4em,0)$);
  \draw[postaction={decorate}] (B) to [out=90,in=-90] ($(qc.south) + (.4em,0)$);
\end{tikzpicture}
$ =\  \frac{2}{3}\ $
\begin{tikzpicture}[baseline=-14pt,scale=.7,
decoration={markings, mark=at position .7 with {\arrow{angle 45}};}]    
  \node (B) at (3.5, -2) [rectangle, draw,  minimum size=.5cm] { B };
  \node (JJ) at (2, -2) {$x$}; 
       \node(alt) at (2,.1) {alt};
  \node (JJJ) at (1, -2)  {$x$};
  \node (c1) at (1,-.5) [rectangle, draw, thick, minimum size=.4cm] {\ ~}; 
  \node (c2) at (2,-.5) [rectangle, draw, thick, minimum size=.4cm] {\ ~}; 
  \node (c3) at (3,-.5) [rectangle, draw, thick, minimum size=.4cm] {\ ~}; 
  \draw[postaction={decorate}] (JJJ) to (c1);
  \draw[postaction={decorate}] (B) to [out=90,in=-90] (c3);
\draw[postaction={decorate}] (JJ) to  (c2) ;
  \draw[postaction={decorate}] (c1) to (c2);
  \draw[postaction={decorate}] (c2) to (c3);
  \coordinate (zap1) at (3+.3,.3);  
  \coordinate (zap2) at (1-.3,.3);  
 \draw[-]                                      (c3)  to  [out=0,in=0] (zap1); 
 \draw[postaction={decorate}]  (zap1)   to   (zap2); 
 \draw[-]                                      (zap2)   to  [out=180,in=180] (c1); 
\end{tikzpicture}
$\ = \ 0$.
\end{center}

\noindent
where first we used Jacobi identity (\ref{eq:jacobi}) and then skewsymmetry of the resulting $\omega$. The right side 
vanishes as $\omega$ has two identical entries, $x$. 
The argument for the other Casimir invariants is similar. 
$\square$

The geometric meaning of the fundamental Nijenhuis property of 
the endomorphism field becomes clear in the current context.  
Namely, it implies that the space of Lax vector fields $\XX_\AA L$ is closed 
under the commutator of vector fields $[ \XX_\AA L, \XX_\AA L] \subset \XX_\AA L$.  
A new bracket of vector fields is implied.


\begin{theorem}		\label{th:6.3}
The space of Lax vector field  forms a subalgebra of the algebra 
of smooth vector fields, $\XX_\AA L <\XX L$.
In particular, if $X_B$ and $X_C$ are two (global) Lax vector fields,
then their commutator is a Lax vector field with potential
\begin{equation}	\label{eq:6.3}
       \{B,C\} =:  -[\![B,C]\!]+[X_B,C] +[B,X_C]- X_{[B,C]}
\end{equation}
so that there is an homomorphism between the Lax vector fields with the
regular vector field commutator and  all vector fields with  $\{\;,\;\}$ product:
\begin{equation}	\label{eq:6.4}
  [X_B,X_C] = X_{\{B,C\} }
\end{equation}
\end{theorem}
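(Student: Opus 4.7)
The plan is to reduce everything to the Nijenhuis identity (3.7) applied to the two vector fields $B$ and $C$, together with the key property of $\AA$ stated in Theorem \ref{th:2.1}. Since $[\AA,\AA]$ is a genuine tensor, formula (\ref{eq:2.4}) extends verbatim from constant fields to arbitrary vector fields $B,C\in\XX L$: at each point, both sides depend only on the values $B(x),C(x)$. So I would record, as the starting identity,
\begin{equation*}
{\textstyle\frac{1}{2}}[\AA,\AA](B,C)\;=\;[\AA B,\AA C]-\AA[\AA B,C]-\AA[B,\AA C]+\AA^{2}[B,C],
\end{equation*}
and combine it with Theorem \ref{th:2.1}, which gives ${\textstyle\frac{1}{2}}[\AA,\AA](B,C)=-\AA(\ldb B,C\rdb)$, where $\ldb B,C\rdb$ denotes the pointwise Lie-algebra product of the vector fields (using $\mu_x$).

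The next step is pure algebraic rearrangement. Since $X_B=\AA B$ and $\AA$ is $\FF$-linear, the right-hand side of the Nijenhuis formula rewrites as
\begin{equation*}
[X_B,X_C]\;=\;\AA[X_B,C]+\AA[B,X_C]-\AA\bigl(X_{[B,C]}\bigr)-\AA\bigl(\ldb B,C\rdb\bigr),
\end{equation*}
which is exactly $\AA$ applied to the vector field
\[
   \{B,C\}\;=\;-\ldb B,C\rdb+[X_B,C]+[B,X_C]-X_{[B,C]}.
\]
This simultaneously proves two things: (a) the commutator $[X_B,X_C]$ lies in $\Im\AA=\XX_{\AA}L$, so Lax fields form a subalgebra of $\XX L$; and (b) $[X_B,X_C]=X_{\{B,C\}}$, which is the desired homomorphism property (\ref{eq:6.4}).

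The only genuine obstacle is a conceptual one: making sure that formula (\ref{eq:2.4}), which was derived in the proof of Theorem \ref{th:2.1} by evaluating on constant fields $\wtilde v,\wtilde w$, is applied legitimately to arbitrary fields $B,C$. I would address this at the outset by emphasizing that $[\AA,\AA]$ is a $(1,2)$ tensor (the Nijenhuis bracket of a $(1,1)$ tensor field), so its value at $x$ on $(B,C)$ equals its value on the constant prolongations of $B(x),C(x)$; no new computation is required. Once this is observed, the whole proof is a one-line rearrangement of the Nijenhuis identity under the substitution $[\AA,\AA]=-2\lambda\jjj\AA$, and in particular the bracket $\{\,\cdot\,,\,\cdot\,\}$ in (\ref{eq:6.3}) is forced on us by matching the four terms on the right-hand side of the Nijenhuis formula.
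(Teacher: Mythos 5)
Your proposal is correct and follows essentially the same route as the paper: rewrite the Nijenhuis identity \eqref{eq:3.7} for $\AA$ evaluated on $B,C$, substitute $\tfrac12[\AA,\AA](B,C)=-\AA\ldb B,C\rdb$ from Theorem~\ref{th:2.1}, and factor out $\AA$ to read off the potential $\{B,C\}$. Your explicit remark that the tensoriality of $[\AA,\AA]$ legitimizes passing from constant fields to arbitrary $B,C$ is a point the paper leaves implicit, but it does not change the argument.
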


\begin{proof}
This follows from the fact that $[\AA,\AA]$ is proportional
to $\AA$.  Rewrite the definition of the Nijenhuis bracket \eqref{eq:3.7} for
$\AA$ and use Theorem \ref{th:2.1}:
\begin{equation}
\begin{array}{rl}
    [X_B,X_C]
          & = [\AA B,\AA C] \\
          & = {1\over 2}[\AA,\AA](B,C) +\AA[\AA B,C] +\AA[B,\AA C]-\AA^2[B,C]\\
          & = -\AA[\![B,C]\!] +\AA [X_B,C] +\AA [B,X_C]-\AA X_{[B,C]}\\
          & = \AA\;\left(\;
                     - [\![B,C]\!]+[X_B,C] +[B,X_C]- X_{[B,C]}
                       \; \right)
\end{array}
\end{equation}
where in the last part we see that the endomorphism field $\AA$ may be "factored out" 
thanks to Theorem \ref{th:2.1}.  Thus the commutator is of the form \eqref{eq:6.1}
the formulas in the theorem follow.
\end{proof}

\begin{proposition}	\label{prp:6.5}
The bracket \eqref{eq:6.3} can be calculated by the
following formula
\begin{equation}	\label{eq:6.5}
\{A,B\} =  \ldb A,\,B\rdb + \underbrace{X_AB-X_BA}_{(A,B)}
\end{equation}
where $X_AB=x^iA^j\;c_{ij}^k\; \q_kB^p\;\q_p$.

\end{proposition}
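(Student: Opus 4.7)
The plan is to verify the identity by direct coordinate computation, starting from the definition \eqref{eq:6.3} and collecting terms. Writing $X_A = \AA A$ in coordinates as $(X_A)^k = x^p A^q c^k_{pq}$ (and analogously for $X_B$), each of the three differential terms on the right of \eqref{eq:6.3} can be expanded by the standard formula for the Lie commutator of vector fields, $[V,W]^k = V^i \q_i W^k - W^i \q_i V^k$.

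The heart of the argument is a pair of ``product rule'' identities. When $B$ differentiates the components of $X_A = x^p A^q c^k_{pq}\,\q_k$, two contributions appear: the factor $x^p$ is hit, giving $B^p A^q c^k_{pq}\,\q_k = \ldb B,A\rdb$, while the factor $A^q$ is hit, giving $x^p (B\cdot A)^q c^k_{pq}\,\q_k = X_{B\cdot A}$, where $B\cdot A := B^i \q_i A^j\,\q_j$ denotes the ordinary directional derivative of the vector field $A$ along $B$. This yields
\[
   [X_A, B] \;=\; X_A B \;+\; \ldb A,B\rdb \;-\; X_{B\cdot A},
\]
and symmetrically
\[
   [A, X_B] \;=\; \ldb A,B\rdb \;+\; X_{A\cdot B} \;-\; X_B A .
\]
Moreover, since $\AA$ is pointwise linear and $[A,B] = A\cdot B - B\cdot A$ as vector fields, $X_{[A,B]} = X_{A\cdot B} - X_{B\cdot A}$.

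Substituting these three identities into \eqref{eq:6.3}, the two ``bonus'' copies of $\ldb A,B\rdb$ combine with the leading $-\ldb A,B\rdb$ to leave a single $+\ldb A,B\rdb$, while the $X_{A\cdot B}$ and $X_{B\cdot A}$ terms cancel against $X_{[A,B]}$; what remains is exactly $\ldb A,B\rdb + X_A B - X_B A$, as claimed. The main obstacle is purely bookkeeping: the five ``derivational'' terms and three algebraic brackets look nearly identical in the intermediate expressions, so one must carefully distinguish the differential operator $X_A B$ (whose components are the $X_A$-derivatives of those of $B$) from the pointwise bracket $\ldb A,B\rdb$ (whose components are $(X_A)^p$-independent), and similarly keep $A\cdot B$ separate from $\ldb A,B\rdb$. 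Once the two product-rule identities above are established, the proposition follows by inspection; the coordinate formula $X_A B = x^i A^j c^k_{ij}\,\q_k B^p\,\q_p$ then records what is meant by the notation in \eqref{eq:6.5}.
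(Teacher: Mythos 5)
Your computation is correct. Note that the paper itself states Proposition \ref{prp:6.5} without any proof --- it is followed only by the remark that $X_AB-X_BA$ has a coordinate-free meaning --- so there is no argument of the author's to compare against; your direct coordinate expansion supplies exactly the verification that is left implicit. The two product-rule identities
\[
[X_A,B]=X_AB+\ldb A,B\rdb-X_{B\cdot A},\qquad [A,X_B]=\ldb A,B\rdb+X_{A\cdot B}-X_BA,
\]
obtained by letting the derivative hit either the factor $x^i$ or the factor $A^j$ (resp.\ $B^j$) of $(X_A)^k=x^iA^jc^k_{ij}$, together with the pointwise linearity $X_{[A,B]}=X_{A\cdot B}-X_{B\cdot A}$, are the whole content; the signs work out ($-1+1+1=+1$ for the copies of $\ldb A,B\rdb$, and the $X_{A\cdot B}$, $X_{B\cdot A}$ terms cancel), and the skew-symmetry $c^k_{ij}=-c^k_{ji}$ is used exactly once, to convert $-\ldb B,A\rdb$ into $+\ldb A,B\rdb$. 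This is the natural (and essentially the only) route to \eqref{eq:6.5}, and it also makes transparent the follow-up remark that $(A,B)=X_AB-X_BA$ is coordinate-independent even though its two summands are not.
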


Notice that although the two right-most terms are defined in coordinates,
their difference has a coordinate-free meaning, as it can be defined
by $X_AB-X_BA = \{A,B\} - \ldb A,\,B\rdb$.
\\

The bracket $\{\;,\;\}$ turns the space of vector fields on $L$ into a
Lie algebra and can be viewed as a ``differential deformation"
of the Lie algebra bracket $\ldb\;,\;\rdb$.
Due to its involved nature, it may be a rather surprising that
defines a Lie algebra.  The Jacobi identity is not a direct consequence
and results by intertwined interaction of the Jacobi identities of the Lie algebra $L$ 
and of the Lie algebra of vector fields.

\begin{remark}	\label{rem:6.4}
For two constant vector fields $\wtilde v$ and $\wtilde w$ 
that extend vectors $v,w\in L$, it is 
$\{\wtilde v,\wtilde w\}=\wtilde{\;\ldb v, w\rdb\;}$.
Thus the bracket formula \eqref{eq:6.4} reduces in this case to the infinitesimal
representation $[X_v,X_w]=X_{\ldb v,w \rdb}$.
\end{remark}

\begin{theorem}		\label{th:6.7}
The pair $\{\XX L,\; \{\;\cdot\;,\;\cdot\;\} \;\}$
forms a Lie algebra, i.e., the bracket (\ref{eq:6.3},\ref{eq:6.6}) of vector fields satisfies
the following properties:
\begin{equation}	\label{eq:6.6}
\begin{aligned}
            (i) &\quad \hbox{(linearity)} \\
            (ii) &\quad \{A,B\}=-\{B,A\} \qquad\hbox{(skewsymmetry)} \\
           (iii) &\quad  \{A,\{B,C\}\}+\{B,\{C,A\}\}+\{C,\{A,B\}\} = 0
                      \quad\hbox{(Jacobi identity)} 
\end{aligned}
\end{equation}
\end{theorem}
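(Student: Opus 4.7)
The plan is to work from the concrete formula of Proposition~\ref{prp:6.5},
$$\{A,B\} = \ldb A,B\rdb + \nabla_{X_A} B - \nabla_{X_B} A,$$
where $\nabla_X Y := X^i \partial_i Y^k\,\partial_k$ denotes the canonical flat connection on $L$ viewed as a vector space. Part~(i) is immediate since each summand is $\mathbb{R}$-bilinear in $(A,B)$. Part~(ii) is equally direct: $\ldb A,B\rdb$ is skew by the Lie axioms of $L$, and $\nabla_{X_A}B - \nabla_{X_B}A$ is visibly antisymmetric under $A \leftrightarrow B$ since $X_A = \AA A$ depends $C^\infty(L)$-linearly on $A$.

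For the Jacobi identity~(iii) I propose a direct cyclic expansion. Substituting the definition of $\{\cdot,\cdot\}$ into $\{A,\{B,C\}\}$ produces three qualitatively distinct kinds of terms. First, the purely algebraic iterated brackets $\ldb A, \ldb B, C \rdb \rdb$ and their cyclic permutations sum to zero by the fibrewise Jacobi identity in $L$. Second, the double-derivative terms arising from $\nabla_{X_A}(\nabla_{X_B}C - \nabla_{X_C}B)$ together with $-\nabla_{X_{\{B,C\}}}A$: here I invoke Theorem~\ref{th:6.3} to rewrite $X_{\{B,C\}} = [X_B,X_C]$, and then flatness of the connection to convert $\nabla_{[X_B,X_C]}$ into $[\nabla_{X_B},\nabla_{X_C}]$; the resulting six operators $\nabla_{X_?}\nabla_{X_?}$ acting on $A$, $B$, or $C$ cancel pairwise under cyclic summation. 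Third, the mixed terms of the two shapes $\ldb A, \nabla_{X_B}C \rdb$ and $\nabla_{X_A}\ldb B,C\rdb$: I plan to use the Leibniz rule $\nabla_X\ldb B,C\rdb = \ldb \nabla_X B, C\rdb + \ldb B, \nabla_X C\rdb$, valid because the structure-constant tensor $\lambda$ is constant and hence parallel, so that every surviving term has the form $\ldb \cdot, \nabla_{X_?} \cdot\rdb$; skewsymmetry of $\ldb\cdot,\cdot\rdb$ then pairs the resulting twelve terms into six cancelling couples.

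The main obstacle is the bookkeeping in this last step. My plan is to label the twelve mixed terms explicitly and exploit the three-fold cyclic symmetry so that verifying a single representative cancellation suffices. Apart from Theorem~\ref{th:2.1} (invoked through its consequence Theorem~\ref{th:6.3}), the proof uses only the Jacobi identity of $L$, flatness of the canonical connection on the vector space $L$, and the Leibniz rule for the parallel tensor $\lambda$; no deeper ingredient is required, and in particular injectivity of $B \mapsto X_B$ is not used (so the argument applies even when $L$ has nontrivial center).
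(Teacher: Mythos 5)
Your proof is correct, and it starts from the same point as the paper's own argument (the explicit formula of Proposition~\ref{prp:6.5}, with your $\nabla_{X_A}B$ being exactly the paper's $X_A\trr B$), but the way you dispose of the second-order terms is genuinely different. The paper expands $\{\{A,B\},C\}$ entirely by hand into terms of the form $\ldb x,\cdot\rdb\trr(\cdot)$, sorts them into six groups, and cancels the cyclic sum by sign-matching for four of the groups and by \emph{two} separate invocations of the Jacobi identity of $L$ for the other two (its groups $(0)$ and $(5)$); notably, the genuinely second-order pieces $X_A^kX_C^q\,\q_q\q_k B^p$, which cancel only by symmetry of mixed partials, are left implicit in that expansion. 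You instead fold the entire term $-\nabla_{X_{\{B,C\}}}A$ into the double-derivative group by quoting Theorem~\ref{th:6.3} ($X_{\{B,C\}}=[X_B,X_C]$) together with the zero-curvature identity $\nabla_{[X,Y]}=[\nabla_X,\nabla_Y]$ of the flat connection; this handles the symmetric second-derivative pieces and the first-order cross terms in one stroke, leaves the twelve mixed terms to cancel in six pairs via the Leibniz rule for the parallel tensor $\lambda$ and skewness, and requires the fibrewise Jacobi identity only once. The trade-off is that your route leans on Theorem~\ref{th:6.3} (hence on the Nijenhuis property of $\AA$), where the paper's hands-on expansion is self-contained modulo formula \eqref{eq:6.5}; in exchange your bookkeeping is more transparent and accounts explicitly for the second-order terms. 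Both arguments are sound, and your observation that injectivity of $B\mapsto X_B$ is nowhere needed is accurate.
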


\begin{proof}
If $X,Y\in\XX L$ are two vector fields, then we denote
$X\trr Y= X^i (\q_i Y^j) \q_j$ a vector field calculated in linear
coordinate system.  Thus, formula \eqref{eq:6.5} can be written as
\[
\{A,B\}= \ldb A,\, B\rdb +(A,\, B)
\]
where
\[
(A,\,B) = X_A\trr B - X_B\trr A
\]
Now, using the formula $X_A=\ldb x,\, A\rdb$, we get
\begin{equation}	\label{eq:*}
\begin{aligned}
\{\{A,\, B\},\,C\}
&= \{ \ldb A,\, B\rdb +(A,\,B), \, C\} \\
&= \underbrace{\ldb \ldb A,\, B \rdb, \, C\rdb}_{(a)}
 + \underbrace{\ldb ( A,\, B ), \, C\rdb}_{(b)}
 + \underbrace{( \ldb A,\, B \rdb, \, C )}_{(c)}
 + \underbrace{(( A,\, B ), \, C )}_{(d)} \cr
&= \underbrace{\ldb \ldb A,\, B \rdb, \, C\rdb}_{(0)}       \qquad(a) \\
& + \underbrace{ \ldb\ldb  x,\, A \rdb\trr B,\, C\rdb}_{(1)}
  - \underbrace{ \ldb\ldb  x,\, B \rdb\trr A,\, C\rdb}_{(2)} \qquad(b) \\
& + \underbrace{ \ldb x ,\, \ldb  A,\, B \rdb \rdb \trr C }_{(5)}
  - \underbrace{ \ldb\ldb  x,\, C \rdb\trr A,\, B\rdb}_{(1)}
  - \underbrace{ \ldb A,\, \ldb  x,\, C \rdb\trr B \rdb}_{(2)} \qquad(c) \\
& + \underbrace{ \ldb x,\, \ldb  x,\, A \rdb\trr B \rdb\trr C}_{(3)}
  - \underbrace{ \ldb\ldb  x,\, C \rdb,\, A\rdb \trr B }_{(5)}
  - \underbrace{ \ldb x,\, \ldb  x,\, C \rdb\trr A \rdb \trr B }_{(3)} \qquad(d) \\
& - \underbrace{ \ldb x,\, \ldb  x,\, B \rdb\trr A \rdb \trr C }_{(4)}
  + \underbrace{ \ldb\ldb  x,\, C \rdb,\, B \rdb \trr A }_{(5)}
  + \underbrace{ \ldb x,\, \ldb  x,\, C \rdb \trr B \rdb \trr A }_{(4)} \qquad(d)
\end{aligned}			\tag{$*$}
\end{equation}
where the letters (a), (b), (c), and (d) are used to indicate the origin
of terms in the second part of the equation.
The sum
\[
\{\{A,\, B\},\,C\} +
\{\{B,\, C\},\,A\} +
\{\{C,\, A\},\,B\}
\]
contains every term of equation \eqref{eq:*} in each of the three cyclic
permutations of A,B, and C. The sum of such terms marked by any of the
numbers (1) to (4) vanish due to opposite signs.  The group of terms
marked by (0) and terms marked by (5) both vanish, each due to the Jacobi
identity of the Lie algebra product.
\end{proof}

\begin{corollary}
\label{cor:6.88}
There is a Lie algebra homomorphism 
$\{ \XX L, \{ \,.\,,\,.\, \} \} \to \{\XX_\AA,\, [\,.\,,\,.\,] \}$
between Lie algebra of vector fields on $L$ with bracket defined by (\ref{eq:6.3})
and the Lie algebra vector fields on $L$ restricted to Lax vector fields.
\end{corollary}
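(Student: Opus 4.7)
The corollary is essentially a repackaging of the two preceding structural results: Theorem \ref{th:6.3}, which delivers the bracket identity $[X_B,X_C] = X_{\{B,C\}}$, and Theorem \ref{th:6.7}, which certifies that $\{\XX L,\{\,\cdot\,,\,\cdot\,\}\}$ is a Lie algebra. So my plan is not to do any new geometry but merely to assemble these pieces into a map and check the three clauses of ``Lie algebra homomorphism'' in sequence.

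The plan is as follows. First, I will define the candidate map explicitly as
\[
  \Phi : \XX L \longrightarrow \XX_\AA L, \qquad \Phi(B) = X_B = \AA B = B\jjj \AA.
\]
Linearity of $\Phi$ is immediate from the fact that $\AA \in \TT^{(1,1)}L$ is a tensor field, so $B\mapsto \AA B$ is even $C^\infty(L)$-linear, hence in particular $\mathbb{R}$-linear. Next I will note that $\Phi$ is surjective onto the target by the very definition $\XX_\AA L = \AA(\XX L)$ given in Definition \ref{defn:6.1}.

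Next I need to verify that the target $\{\XX_\AA L,[\,\cdot\,,\,\cdot\,]\}$ really is a Lie algebra with the standard vector field commutator. Closure under $[\,\cdot\,,\,\cdot\,]$ is exactly the content of Theorem \ref{th:6.3}: for any $B,C\in \XX L$, $[X_B,X_C]=X_{\{B,C\}}$, which lies in $\XX_\AA L$. Skew-symmetry and Jacobi are inherited from the ambient Lie algebra $\XX L$ of smooth vector fields. Meanwhile, the source $\{\XX L,\{\,\cdot\,,\,\cdot\,\}\}$ is a Lie algebra by Theorem \ref{th:6.7}.

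The homomorphism identity is then the one-line calculation
\[
  \Phi(\{B,C\}) \;=\; X_{\{B,C\}} \;=\; [X_B,X_C] \;=\; [\Phi(B),\Phi(C)],
\]
where the middle equality is precisely formula \eqref{eq:6.4} of Theorem \ref{th:6.3}. Because all nontrivial input has been absorbed into Theorems \ref{th:6.3} and \ref{th:6.7}, there is no genuine obstacle here; if anything, the only subtlety worth mentioning is that $\Phi$ is in general not injective (its kernel contains constant fields valued in the center of $L$, and more generally any $B$ for which $\AA B \equiv 0$ on $L$), so $\Phi$ is a homomorphism and not an isomorphism onto $\XX_\AA L$. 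I would close by observing that this parallels the standard Hamiltonian homomorphism $H\mapsto X_H$ on a symplectic manifold, whose kernel consists of (locally) constant Hamiltonians, thereby cementing the analogy advertised in the introduction.
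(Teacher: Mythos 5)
Your proposal is correct and matches the paper's intent exactly: the corollary is an immediate assembly of Theorem \ref{th:6.3} (the identity $[X_B,X_C]=X_{\{B,C\}}$ and closure of $\XX_\AA L$) with Theorem \ref{th:6.7} (that $\{\,\cdot\,,\,\cdot\,\}$ makes $\XX L$ a Lie algebra), via the linear map $B\mapsto X_B=\AA B$. The added remarks on surjectivity, the kernel, and the Hamiltonian analogy are accurate but not needed.
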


By analogy to Hamiltonian formalism of classical mechanics
we have a property that may be viewed as a counterpart of Poisson Theorem: 

\begin{corollary}[``\`a la Poisson'']	\label{cor:6.8}
If vector fields $B$ and $C$ are Lax
potentials of symmetries of a dynamical system, then so is $\{B,C\}$.
\end{corollary}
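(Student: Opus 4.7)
The plan is to reduce this to the Jacobi identity for the ordinary Lie bracket of vector fields on $L$, using Theorem \ref{th:6.3} which identifies $X_{\{B,C\}}$ with the commutator $[X_B,X_C]$.

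First I would fix notation: let $X_H$ be the Lax vector field generating the dynamical system in question. By definition, ``$B$ is a Lax potential of a symmetry'' means that the Lax vector field $X_B = \AA B$ is a symmetry, i.e., $[X_B, X_H] = 0$, and similarly $[X_C, X_H] = 0$. The goal is then to show $[X_{\{B,C\}}, X_H] = 0$.

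Next I would invoke Theorem \ref{th:6.3}, which gives the key identity $X_{\{B,C\}} = [X_B, X_C]$ (here using that the Lax bracket was constructed precisely so that $\AA$ intertwines $\{\,\cdot\,,\,\cdot\,\}$ and the commutator on the image). Substituting this in, the problem becomes showing $[[X_B,X_C], X_H] = 0$. This is now a statement purely about ordinary vector fields on $L$.

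Finally I would apply the Jacobi identity for the Lie bracket on $\XX L$:
\begin{equation*}
[[X_B,X_C], X_H] + [[X_C,X_H], X_B] + [[X_H,X_B], X_C] = 0.
\end{equation*}
The last two terms vanish by the symmetry hypothesis $[X_C,X_H] = 0$ and $[X_B,X_H]=0$, so $[[X_B,X_C],X_H]=0$, and hence $X_{\{B,C\}}$ commutes with $X_H$, proving $\{B,C\}$ is itself a Lax potential of a symmetry.

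There is no real obstacle here: the whole content is concentrated in Theorem \ref{th:6.3}, which packages the Nijenhuis property $[\AA,\AA] = -2\lambda\jjj\AA$ of Theorem \ref{th:2.1} into a homomorphism-type statement. Once that is available, the corollary is the classical Poisson-theorem argument transplanted verbatim: Jacobi for vector field commutators plus the fact that symmetries form a Lie subalgebra of $\XX L$.
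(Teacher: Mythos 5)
Your proposal is correct and is essentially identical to the paper's own proof: both interpret the hypothesis as $[X_B,X]=[X_C,X]=0$ for the dynamical system's vector field $X$, invoke the homomorphism $[X_B,X_C]=X_{\{B,C\}}$ from Theorem \ref{th:6.3}, and conclude via the Jacobi identity for the commutator of vector fields. No gaps; the only cosmetic difference is that you denote the dynamical system by $X_H$ while the paper leaves it as a general vector field $X$.
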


\begin{proof} Use the Jacobi identity for vector fields
\[
 [X_B,\; \underbrace{[X_C,X]}_{0} \;] +
 [X,\; \underbrace{[X_B,X_C]}_{X_{ \{B,C\} } }\; ] +
 [X_C,\; \underbrace{[X,X_B]}_{0} \;] = 0
\]
hence the claim: $\Ld_{X_{\{B,C\}}} X = 0$.
\end{proof}

\noindent
{\bf Basic examples.} What can be used as a Lax potential?
The simplest are constant vector fields, 
in which case the homomorphism reduces to Proposition \ref{prop:homo}
(see Remark \ref{rem:6.4}).  
Also, a Lax vector field may be ``reused" as a potential
for a new Lax vector field.  
The following formulas for bracket $\{\cdot,\cdot\}$ may be useful 
for such dynamical systems
\[
\{ v,w \} = \ldb v,w \rdb \qquad
\{ X_v, w \} = X_{\ldb v,w \rdb} \qquad
\{ X_v,X_w \} = X_{\AA \ldb v,w \rdb } - \ldb X_v,X_w \rdb   
\]
where $v$ and $w$ are understood as constant vector fields (the tilde is suppressed for simplicity).

Another class consists of  Lax vector fields  
generated from linear vector fields on $L$.
Euler's equations of the motion a rigid body belongs to this category.
Here is their --- somewhat na\"ive --- generalization to arbitrary Lie algebra:
Let $R\in\hbox{End}\, L$ be a matrix describing  the tensor of inertia.
If vector field $J\jjj \wtilde R$ is used as a "potential",
the resulting Lax vector field $X = \AA(J\jjj \wtilde R)$ describes the dynamical system of ``rotating body".
In the case of the Lie algebra of 3-dimensional orthogonal group
$L=so(3,\mathbb R)$, with the standard coordinates $(x,y,z)$, 
and for a diagonal matrix $R = \hbox{diag}(a,b,c)$, we get 
the standard Euler's equations
\[
X = (b-a)xy\q_z + (c-b)yz\q_x + (a-c)zx\q_y 
\] 
(or $\dot x = (c-b)zy$, etc.).
A more accurate description will be given elsewhere.

\newpage
\section{Analogies and dualities}   
\label{s:duality}
 
The analogies between differential geometry (calculus) on a Lie algebra and on a Lie coalgebra
are shown in the following table.  Note that the  Lie algebra structure $c$ is a (1,2)-variant tensor on 
the Lie algebra $L$, but it is a (2,1)-variant tensor on the dual space $L^*$.
This results in quite different calculus on both spaces treated as manifolds.

\begin{table}[h!]
\begin{tabular}{lll}
\hline\\[-7pt]
                                                             &\bf Lie algebra $L$              &\bf   Lie coalgebra $L^*$ \\
                                                             &\bf as manifold                    &\bf   as a manifold\\[3pt]
\hline\\[-6pt]
Coordinates                                          &$\{x_i\}$                     &$\{ x^i\}$                         \\[4pt]
Constant structure tensor                     & $\lambda=\wtilde c$       &$\lambda' = \wtilde c$\\
\qquad \small  ...in coordinates              &$\phantom{\lambda}={\scriptstyle\frac{1}{2}}
                                                                                                      c^k_{ij}\, \q_k\otimes dx^i\w dx^j$      \qquad   
                                                                                                     &$\phantom{\lambda'}={\scriptstyle\frac{1}{2}}
                                                                                                                  c^k_{ij}\, dx_k\otimes \q^i\w \q^j$\\[7pt]
Jacobi vector field                                &$J= x_i\q^i$          &$J' = x^i\q_i$                    \\[7pt]
Primary differential object                    &$\AA = J\jjj  \lambda$        &  $\Omega = J ' \jjj \lambda'$\\
\qquad \small ...in coordinates                                  &$\phantom{\AA}={\scriptstyle\frac{1}{2}}
                                                                                                              c^k_{ij} x^i\, \q_k\otimes dx^j$      \qquad   
                                                                                   &$\phantom{\Omega}={\scriptstyle\frac{1}{2}}
                                                                                                             c^k_{ij} x_k\,  \q^i\w \q^j$\\
                                                             &\small(endomorphism field)      &\small (Poisson structure) \\[7pt]  
Basic rule                                              &$[\AA,\AA] =  -2\AA\jjj\lambda$    &$[\Omega,\Omega] =  0$\\[-3pt]
\ \small(consequence of Jacobi identity) \\[7pt]

``A potential''                                          &$B\in \XX L$ (vector field)  &$H\in \FF L$ (function) \\
\quad ...generates dynamical system   \qquad                     &$X_B = B\jjj \AA \equiv \AA B $  &$X_H = dH\jjj \Omega$\\ 
                                                                           &\small(Lax equations)     &\small(Hamilton equations) \\[5pt]    
\hline\\[-18pt]
\end{tabular}
\caption{Legend:  $\q_i \equiv \q/\q x_i$, $\q^i \equiv \q/\q x^i$.  
Tilde $\sim$ denotes extension of tensors to tensor fields
on $L$ and on $L^*$, 
defined by the affine structure on linear spaces.}
\end{table}

On $L^*$ as a manifold, $\lambda$ is (2.1) variant.
In pictures, the canonical Poisson structure on $L^*$ and Hamiltonian mechanics may be illustrated as follows:
\begin{center}
$\lambda' =$ 
\begin{tikzpicture}[baseline=-4pt,scale=.6,, yscale=.9,
decoration={markings, mark=at position .7 with {\arrow{angle 45}};}]    
  \node (t) at (0,0) [rectangle, draw, thick, minimum size=.5cm] { ~ };
  \coordinate (J) at (-.7,-2) ;
  \coordinate (JJ) at (.5,-2);
  \coordinate (u) at (0,1.3);
  \draw [-triangle 60] ($(t.south) - (.6em,0)$) to  [out=-90,in=90](J);
  \draw [-triangle 60] ($(t.south) + (.6em,0)$) to  [out=-90,in=90](JJ);
  \draw[-o] (t) to [out=90,in=-90] (u);
\end{tikzpicture}
\qquad\quad
$\Omega = J'\jjj\lambda' = $
\begin{tikzpicture}[baseline=-4pt,scale=.6,, yscale=.9,
decoration={markings, mark=at position .7 with {\arrow{angle 45}};}]    
  \node (t) at (0,0) [rectangle, draw, thick, minimum size=.5cm] { ~ };
  \coordinate (J) at (-.7,-2);
  \coordinate (JJ) at (.5,-2);
  \node (u) at (0,1.7) {$x$};
  \draw [-triangle 60] ($(t.south) - (.6em,0)$) to  [out=-90,in=90](J);
  \draw [-triangle 60] ($(t.south) + (.6em,0)$) to  [out=-90,in=90](JJ);
  \draw[postaction={decorate}]  (u) to [out=-90,in=90] (t);
\end{tikzpicture}
\qquad\quad
$X_\mathcal H = d\mathcal H \jjj \Omega =$ 
\begin{tikzpicture}[baseline=-4pt,scale=.6,, yscale=.9,
decoration={markings, mark=at position .7 with {\arrow{angle 45}};}]    
  \node (t) at (0,0) [rectangle, draw, thick, minimum size=.5cm] { ~ };
  \node (J) at (-.8,-2.1)  {$d\mathcal H$};
  \coordinate (JJ) at (.5,-2);
  \node (u) at (0,1.7) {$x$};
  \draw [postaction={decorate}]  ($(t.south) - (.6em,0)$) to  [out=-90,in=90](J);
  \draw [-triangle 60] ($(t.south) + (.6em,0)$) to  [out=-90,in=90](JJ);
  \draw[postaction={decorate}]  (u) to [out=-90,in=90] (t);
\end{tikzpicture}

\end{center}

\section{Remark on Lagrange equations}   

%

While the cotangent bundle $T^*Q$ over a manifold $Q$
possesses a canonical differential biform $\omega\in\La^2Q$ defining symplectic
structure, the tangent bundle $TQ$ possesses a canonical  (1,1)-variant tensor
field $S\in\TT^{(1,1)}TQ$ defining an endomorphism field (endomorphisms of $T(TQ)$ and $T^*(TQ)$).
In the natural coordinates $(x^i,v^i)$ on $TQ$, this tensor can be
expressed as $S={\q\over \q v^i} \otimes dx^i$ (sum over $i$).
Its basic property is $\Ker S = \Im S$ (implying nilpotence $S\om S=0$).
If $\mathcal L$ is a function on $TQ$ (a Lagrangian), 
then one defines a biform $\omega=d\om S\om d\,\mathcal L$, which for a ``regular'' Lagrangian is
nondegenerate and therefore forms a symplectic structure.
It is easy to see that Lagrange equations
may be written as 
\[
\Ld_X (S\om d\mathcal L) = d\mathcal L
\]
The existence of $S$ and its role in Lagrangian mechanics was noticed rather late \cite{MS};
it replaces a rather awkward notion of ``vertical derivative" 
used before in an attempt to geometrize Euler-Lagrange equations \cite{AM}. 

In a series of papers \cite{Cr1, Cr2},
a notion of {\it almost tangent structure} on a differential manifold $M$
has been introduced, as a tensor $S\in\TT^1_1M$ that satisfies
\begin{equation}	\label{eq:1.2}
\begin{aligned}
     (i)&\quad \Ker S = \Im S  \qquad (\Rightarrow\quad S\om S=0)\\
    (ii)&\quad [S,S]  = 0 
\end{aligned}
\end{equation}
where the second condition (ii) is a generalization of
the Schouten-Nijenhuis bracket to ``vector-valued differential
forms'' (see e.g. \cite{Tu2} and \cite{YI}),
which assures (local) integrability of the distribution  $\Ker S$.
As a result, one obtains all of the structure of the tangent bundle
($\Ker S$ gives the fibering) except distinguishing
the zero-section.  

A Lie algebra may provide an example of a generalized version of 
such an Euler-Lagrange structure, in which the above conditions (\ref{eq:1.2})
are relaxed.
\[
\begin{matrix}
\hbox{Hamilton} &\to &\hbox{cotangent bundle} &\to
                     &\hbox{symplectic} &\to &\hbox{KKS theorem}     \\
\hbox{equations}&    & T^*Q                   &
                     &\hbox{geometry}  &    &\hbox{(Lie coalgebras)}\\[8pt]
\hbox{Lagrange}      &\to &\hbox{tangent bundle}   &\to
                     &\hbox{endomorphic} &\to &\hbox{Theorem 2.1}     \\
\hbox{equations}&    & TQ                     &
                     &\hbox{geometry}  &    &\hbox{(Lie algebras)}  
\end{matrix}
\]

Whether such potential relationship between Lie algebras and generalized Lagrangian formalism
would be fruitful is an interesting question in the context of geometric quantization
and representation theory known for coadjoint orbits in the Lie co-algebras.


\section*{Acknowledgements}

The author would like to thank Zbigniew Oziewicz for discussions that much improved the presentation 
and Philip Feinsilver for many valuable comments and suggestions.

\newpage






\begin{thebibliography}{99}
\itemsep-4pt
{\small
\bibitem{AM} 
Abraham R and Marsden J,  {\sl Foundations of Mechanics},
Benjamin, New York, 1967.

\bibitem{Ar} 
Arnol'd V I,  {\sl Mathematical Methods of Classical Mechanics},
Springer-Verlag, New York, 1978.

\bibitem{CIMP} 
Cari\~nena J F, Ibort A, Marmo G, and Perelomov A, On the geometry
of Lie algebras and Poisson tensor, {\it J. Phys. A: Math. Gen.} {\bf 27}
(1994) pp. 7425-7449.

\bibitem{Cr1}  
Crampin, J F, Defining Euler-Lagrange fields in terms
of almost tangent structures,
{\it Phys. Lett.} {\bf 95A} (1983), 466--468.

\bibitem{Cr2}  
Crampin J F, Tangent bundle geometry for Lagrangian dynamics,
{\it J. Phys. A: Math. Gen.} {\bf 16} (1983), 3755--3772.

\bibitem{Di}
Dickey, L.A,
Soliton Equations and Hamiltonian Systems,
World Scientific, New Jersey, 1991.

\bibitem{FKS}   
Feinsilver P, Kocik J, and Schott R,  Representations and Stochastic
Processes on Groups of Type-H, {\it J. Func. An.}, {\bf 115}, 1, (1993),
146--165.

\bibitem{Ka}   
Kaplan A, On the geometry of groups of Heisenberg-type, 
{\it Bull. London Math. Soc.} {\bf 15}, (1983),  35--42.

\bibitem{Ki}   
Kirillov A, ``Elements of the Theory of Representations,"
Springer-Verlag, New York, Berlin, 1976.

\bibitem{Ko} 
Kocik J, ``Lie tensors", PhD Thesis, Southern Illinois University at Carbondale (July 1989).

\bibitem{MS}
Marmo G and Saletan E J, {\it Nuovo Cimento}, {\bf 40B}, 67, (1977).

\bibitem{Li}
Lichnerowicz A, Les vari\`etes de Poisson et leur alg\`ebres associ\'ees,
{\it J. Diff. Geom.} {\bf 12} (1977), 253--300. 

\bibitem{Mi}
Michor P W, Remarks on the Fr\"olicher-Nijenhuis Bracket,
Differential Geometry and its Applications, Proc. Conf (Aug 24-30, 1986,
Brno, Czechoslovakia.

\bibitem{Sl}  
\'Slebodzi\'nski W,  Sur les \`equations de Hamilton,
{\it Bulletin de l'Acad\'emie Royale Belgique}, (5) 17 (1931) 864--870.

\bibitem{So}  
Souriau J M, Quantification g\'eom\'etrique, {\it Comm. Math. Phys.},
Vol 1, 1966, pp. 374-398.

\bibitem{Tu1}   
Tulczyjew W M,  Poisson brackets and canonical manifolds,
{\it Bulletin de l'Acad\'emie Polonaise des Sciences (Math.)},
{\bf 22}, No. 9, (1974) 937--942.

\bibitem{Tu2}  
Tulczyjew W M,  The graded Lie algebra of multivector fields
and the generalized Lie derivative of forms,
{\it Bulletin de l'Acad\'emie Polonaise des Sciences (Math.)},
{\bf 22}, No. 9, (1974), 931--935

\bibitem{YI}   
Yano K and Ishihara S,  {\sl Tangent and Cotangent Bundles},
Marcel Dekker, New York, 1973.
}
\end{thebibliography}
\end{document}